
\newcommand*{\PAPER}{}%

\ifdefined\PAPER
\documentclass[conference,10pt,onecolumn,final,a4paper]{IEEEtran}
\fi

\usepackage{epsfig,endnotes}
\usepackage{color, colortbl}
\usepackage{tikz}
\usepackage{graphicx}
\usepackage{amssymb,amsmath}
\usepackage{booktabs}
\usepackage{verbatim}
\usepackage{multirow}
\usepackage{subcaption}
\usepackage{hyperref,cleveref}
\usepackage{url}
\usepackage[linesnumbered,ruled,vlined]{algorithm2e}
\usepackage{amsthm}
\usepackage{enumerate}

\newtheorem{prop}{Proposition}

\usepackage{times}

\hypersetup{
    colorlinks=true,
    linkcolor=black,
    citecolor=black,
    filecolor=black,
    urlcolor=black,
}

\newtheorem{theorem}{Theorem}[section]

\newtheorem{lemma}[theorem]{Lemma}

\crefname{section}{§}{§§}
\Crefname{section}{§}{§§}

\definecolor{DarkGray}{gray}{0.85}

\newcommand{\AC}[0]{\textsc{AllConcur}}

\newcommand{\tobroadcast}[1]{\emph{A}-\emph{broa\-dcast}$(\mathit{#1})$}
\newcommand{\todeliver}[1]{\emph{A}-\emph{deliver}$(\mathit{#1})$}
\newcommand{\sender}[1]{$\mathit{sender}(\mathit{#1})$}

\DeclareMathAlphabet{\mathpzc}{OT1}{pzc}{m}{it}

\newlength\mylen



\usepackage{booktabs} 

\begin{document}

\sloppy

\date{}

\title{Formal Specification and Safety Proof of a Leaderless Concurrent Atomic Broadcast Algorithm}

\author{
\IEEEauthorblockN{Marius Poke}
\IEEEauthorblockA{HLRS\\
University of Stuttgart\\
marius.poke@hlrs.de}
\and
\IEEEauthorblockN{Colin W. Glass}
\IEEEauthorblockA{HLRS\\
University of Stuttgart\\
glass@hlrs.de}
}

\maketitle

\begin{abstract}
Agreement plays a central role in distributed systems working on a common task.
The increasing size of modern distributed systems makes  them more
susceptible to single component failures.
Fault-tolerant distributed agreement protocols rely for the most part on leader-based atomic broadcast algorithms, 
such as Paxos. 
Such protocols are mostly used for data replication, which requires only a small number of servers to reach agreement. 
Yet, their centralized nature makes them ill-suited for distributed agreement at large scales. 
The recently introduced atomic broadcast algorithm \AC{} enables high throughput for distributed agreement 
while being completely decentralized.
%
In this paper, we extend the work on \AC{} in two ways. 
First, we provide a formal specification of \AC{} that enables a better understanding of the algorithm. 
Second, we formally prove \AC{}'s safety property on the basis of this specification. 
%
Therefore, our work not only ensures operators  safe usage of \AC{}, 
but also facilitates the further improvement of distributed agreement protocols based on \AC{}. 
\end{abstract}

\begin{IEEEkeywords}
Distributed Agreement; Atomic Broadcast; Design Specification; Safety Proof; TLA+
\end{IEEEkeywords}

\section{Introduction}

Distributed systems working on a common task often have a shared state. 
Typically, the ordering of updates to this shared state is relevant, 
i.e. only identical ordering guarantees identical resulting states. 
Thus, to ensure identical distributed states, distributed agreement is required. 
Both for systems handling critical data and very large systems, the ability to sustain failures is crucial. 
For the former, inconsistencies are unacceptable, for the latter, failures become so common, 
that the lack of fault tolerance becomes a performance issue. 
Atomic broadcast algorithms enable fault-tolerant distributed agreement.

In this paper we formally specify \AC{}~\cite{poke2017allconcur}, a protocol that provides distributed agreement through a leaderless 
concurrent atomic broadcast algorithm under the assumption of 
partial synchrony~(\cref{sec:allconcur}). 
\AC{}\footnote{We use \AC{} to refer to both the distributed agreement protocol and the atomic broadcast algorithm.} 
enables decentralized distributed agreement among a group of servers that 
communicate over an overlay network described by a sparse digraph. 
Moreover, it requires subquadratic work per server for every agreement instance 
and it significantly reduces the expected agreement time by employing an early termination mechanism. 
\AC{}'s specification is based on the original description of the protocol~(\cref{sec:design}). 
In addition, we provide a mechanically verifiable proof of \AC{}'s safety~(\cref{sec:proof}); the proof follows the steps 
of the informal proof described in the original paper~\cite{poke2017allconcur}. 

\subsection{Related work and motivation}
Atomic broadcast plays a central role in fault-tolerant distributed systems; 
for instance, it enables the implementation of both state machine replication~\cite{Schneider:1990:IFS:98163.98167,Lamport1978} 
and distributed agreement~\cite{poke2017allconcur,Unterbrunner2014}.
As a result, the atomic broadcast problem sparked numerous proposals for algorithms~\cite{Defago:2004:TOB:1041680.1041682}.
Many of the proposed algorithms rely on a distinguished server (i.e., a leader) to provide total order;
yet, the leader may become a bottleneck, especially at large scale.
As an alternative, total order can be achieved by
destinations agreement~\cite{Defago:2004:TOB:1041680.1041682,Chandra:1996:UFD:226643.226647,poke2017allconcur}. 
On the one hand, destinations agreement enables decentralized atomic broadcast algorithms; 
on the other hand, it entails agreement on the set of delivered messages and, thus, it requires consensus.

Most consensus algorithms and implementations are 
leader-based~\cite{Lamport:1998:PP:279227.279229, Lamport2001, Liskov2012, Ongaro2014, 
Burrows:2006:CLS:1298455.1298487, Poke:2015:DHS:2749246.2749267,Chandra:2007:PML:1281100.1281103,Corbett:2012:SGG:2387880.2387905};
thus, one server is on the critical path for all communication. 
Several attempts were made to increase performance by adopting 
a multi-leader approach~\cite{Moraru:2013:MCE:2517349.2517350, Losa:2016:BAF:2933057.2933072,Mao:2008:MBE:1855741.1855767}.
Still, such approaches assume the overlay network is described by a complete 
digraph, i.e., each server can send messages to any other server.
In general, leader-based consensus algorithms are intended for data replication, where the number of replicas (i.e., servers) are bounded 
by the required level of data reliability~\cite{Poke:2015:DHS:2749246.2749267}.
Distributed agreement has no such bound---the number of agreeing servers is an input parameter and it can be in the range of thousands or more. 
Thus, solutions that rely (for communication) on a complete digraph are not suitable for distributed agreement.

\AC{} provides leaderless consensus and, thus, leaderless atomic broadcast, 
by using (as overlay network) any digraph with a vertex-connectivity exceeding the maximum number of tolerated failures. 
The original paper provides both a detailed description of the \AC{} algorithm and an informal proof of its correctness~\cite{poke2017allconcur}.
Yet, a formal specification allows for a better understanding of the algorithm and provides the basis for a formal proof of correctness. 
To formally specify \AC{}'s design, we use the TLA+ specification language~\cite{Lamport:2002:SST:579617}. 
Then, we use the TLA+ Proof System (TLAPS)~\cite{tlaps} to formally proof that \AC{}'s specification guarantees safety. 
Both the specification and the mechanically verifiable proof are publicly available~\cite{allconcur_tla}.

\textbf{Key contributions.}
In summary, our work makes two key contributions:
\begin{itemize}
\setlength{\itemsep}{0pt}
  \item a formal specification of \AC{}---a distributed agreement protocol that 
  relies on a leaderless concurrent atomic broadcast algorithm~(\cref{sec:design});
  \item a formal proof of \AC{}'s safety property~(\cref{sec:proof}).
\end{itemize}

\section{Overview}
\label{sec:overview}

This section describes the system model we consider for solving the atomic broadcast problem~(\cref{sec:abcast}) 
and it provides an overview of \AC{} and its early termination mechanism~(\cref{sec:allconcur}). 

\subsection{The atomic broadcast problem}
\label{sec:abcast}

We consider $n$ servers that are subject to a maximum of $f$ fail-stop failures.
The servers communicate via messages according to an overlay network, 
described by a digraph $G$---server $p$ can send a message to another 
server $q$ if $G$ has an edge $(p,q)$. 
The communication is reliable, i.e., messages cannot be lost, only delayed; 
also, message order is preserved by both edges and nodes.
We use the terms node and server interchangeably. 

To formaly define atomic broadcast, we use the notations from Chandra and Toueg~\cite{Chandra:1996:UFD:226643.226647}:
$m$ is a message (that is uniquely identified);
\tobroadcast{m} and \todeliver{m} are communication primitives for broadcasting and delivering messages atomically;
and \sender{m} denotes the server that A-broadcasts~$m$.
Then, any (non-uniform) atomic broadcast algorithm must satisfy four properties~\cite{Chandra:1996:UFD:226643.226647, Hadzilacos:1994:MAF:866693}:
\begin{itemize}
\setlength{\itemsep}{0pt}
  \item (Validity) If a non-faulty server A-broadcasts $m$, then it eventually A-delivers $m$.
  \item (Agreement) If a non-faulty server A-delivers $m$, then all non-faulty servers eventually 
  A-deliver $m$.
  \item (Integrity) For any message $m$, every non-faulty server A-delivers $m$ at most once, and only if 
  $m$ was previously A-broadcast by \sender{m}.
  \item (Total order) If two non-faulty servers $p$ and $q$ A-deliver messages $m_1$ and $m_2$, 
  then $p$ A-delivers $m_1$ before $m_2$, if and only if $q$ A-delivers $m_1$ before $m_2$. 
\end{itemize}
Integrity and total order are safety properties; validity and agreement are liveness property~(\cref{sec:safe_and_live}).
If validity, agreement and integrity hold, the broadcast is \emph{reliable}~\cite{Hadzilacos:1994:MAF:866693, Chandra:1996:UFD:226643.226647}.
We consider reliable broadcast algorithms that use (as overlay networks) digraphs with the vertex-connectivity 
exceeding $f$; thus, despite any $f$ failures, the servers remain connected.
Moreover, we consider atomic broadcast algorithms that provide total order through
\emph{destinations agreement}~\cite{Defago:2004:TOB:1041680.1041682}---all 
non-faulty servers reach \emph{consensus} on the set of messages to be 
A-delivered in a deterministic order.

In a synchronous round-based model~\cite[Chapter~2]{attiya2004distributed}, 
consensus requires (in the worst case) at least $f+1$ rounds~\cite{Aguilera:1998:SBP:866987}.
Clearly, if $G$ is used as overlay network, consensus requires (in the worst case) $f+D_f(G,f)$ rounds, 
where $D_f(G,f)$ denotes $G$'s 
\emph{fault diameter}---$G$'s diameter after removing any $f$ nodes~\cite{Krishnamoorthy:1987:FDI:35064.36256}.
Yet, always assuming the worst case is inefficient: It is very unlikely for the number of rounds to exceed $D_f(G,f)$, 
if the mean time between failures is long compared to the length of rounds~\cite{poke2017allconcur}. 

\subsection{\AC{}}
\label{sec:allconcur}
We consider \AC{}~\cite{poke2017allconcur}---a leaderless concurrent atomic broadcast algorithm 
that adopts a novel early termination mechanism to avoid assuming always the worst case.
\AC{} is round-based: In every round, every server A-broadcasts a (possibly empty) message and 
then A-delivers all known messages in a deterministic order. 
Since a synchronous model is impractical, while an asynchronous model makes solving consensus impossible~\cite{Fischer:1985:IDC:3149.214121},  
\AC{} assumes the following model of partial synchrony---message delays can be approximated by a known distribution~\cite{poke2017allconcur}. 
Under this assumption, a heartbeat-based failure detector (FD)~\cite{Chandra:1996:UFD:226643.226647}---every server 
sends heartbeats to its successors and, if it fails, the successors detects the lack of heartbeats---can be treated 
(with a certain probability) as a \emph{perfect} FD. 
Specifically, completeness (i.e., all failures are eventually detected) is deterministically guaranteed, 
while accuracy (i.e., no server is falsely suspected to have failed) is probabilistically guaranteed~\cite{poke2017allconcur}.

\AC{}'s early termination mechanism uses failure notifications to track A-broadcast messages. 
Every server maintains a \emph{tracking digraph} for every A-broadcast message. 
The nodes of a tracking digraph are the servers suspected to have the A-broadcast message, 
while the edges indicate the suspicion of how the message was transmitted. 
To stop tracking an A-broadcast message, a server must either receive the message or 
suspect only failed servers to have the message. A server can safely A-deliver 
its known messages once it stops tracking all A-broadcast messages~(\cref{sec:proof}).

The early termination mechanism relies on the following proposition:
\begin{prop}
\label{prop:in_order_trans}
If a non-faullty server receives an A-broadcast message and, subsequently, reliably broadcasts a failure notification, 
any other non-faulty server receives the A-broadcast message prior to the failure notification.
\end{prop}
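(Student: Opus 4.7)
The plan is to prove the proposition by induction on the hop distance along which the failure notification $fn$ propagates, leveraging the two order-preservation assumptions stated in \cref{sec:abcast}: messages are delivered in order on every edge, and every server forwards messages in the order it receives them.

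Let $p$ denote the non-faulty server that receives $m$ at some time $t_1$ and subsequently initiates the reliable broadcast of $fn$ at time $t_2 > t_1$. Since $p$ participates in the reliable broadcast of $m$, it forwards $m$ along each of its outgoing edges by time $t_1$, and it only sends $fn$ along those same edges at time $t_2$. Hence on every outgoing edge of $p$ the message $m$ is enqueued strictly before $fn$, and by edge-order preservation each immediate successor of $p$ receives $m$ before $fn$.

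I would then propagate this ordering by induction on the hop distance along any path that $fn$ traverses. Suppose an intermediate non-faulty server $v_i$ on such a path has received $m$ before $fn$. Because node-order preservation ensures that $v_i$ forwards outgoing messages in the order in which it received them, $v_i$ sends $m$ to $v_{i+1}$ before $fn$; by edge-order preservation, $v_{i+1}$ also receives $m$ before $fn$. Iterating along the path until reaching an arbitrary non-faulty destination $q$ (and noting that $q$ may only have received $m$ earlier through some alternative path, which only strengthens the conclusion) establishes the claim. The validity of reliable broadcast guarantees that $q$ receives both $m$ and $fn$, so the ordering statement is non-vacuous.

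The step I expect to be the main obstacle is formally justifying the base case: that $p$ cannot interleave the two reliable broadcasts in a way that violates the per-edge ordering. Concretely, I must argue that by the time $p$ places $fn$ on edge $(p,q')$, it has already placed $m$ on $(p,q')$. This hinges on how \AC{} interleaves message reception, forwarding, and failure handling, and will likely require a TLA+ invariant stating exactly this per-edge property of $p$'s send buffer. Once that invariant is in place, the inductive propagation reduces to a routine application of the two order-preservation axioms, and the rest of the argument follows by a straightforward induction on path length within the overlay digraph.
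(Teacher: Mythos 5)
Your argument is essentially the paper's own justification: the paper supports Proposition~\ref{prop:in_order_trans} only by the two observations that (1) a non-faulty server forwards any message it receives for the first time to its successors and (2) message order is preserved by both edges and nodes, and your induction on hop distance along the dissemination paths of the failure notification is precisely the formalization of those two facts (the per-edge base-case invariant you anticipate needing is supplied by the FIFO send buffer of the NET module, which fully transmits a message to all its destinations before moving to the next one). The only nit is that the relay servers in your inductive step need not be non-faulty --- under the fail-stop model a crashed server still respected FIFO for whatever it managed to send before stopping --- but this does not change the argument.
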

The reason Proposition~\ref{prop:in_order_trans} holds is twofold: 
(1) a non-faulty server sends further (to its successors) any message it receives for the first time; 
and (2) message order is preserved by both edges and nodes.
 
\begin{figure}[!tp]
\captionsetup[subfigure]{justification=centering}
\centering
\subcaptionbox{$G_S(9,3)$\label{fig:gs_n9_d3}} {
\includegraphics[width=.20\textwidth]{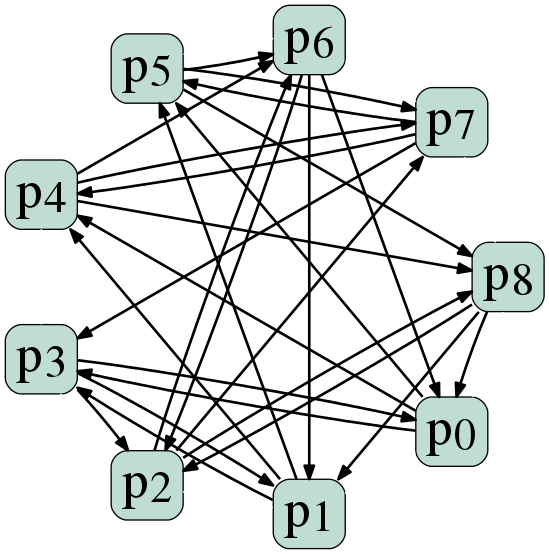}
}
\qquad
\subcaptionbox{\label{fig:tracking_digraph}} {
\scalebox{0.95}{\input{figures/tracking_digraph_small.tex}}
}
\caption{(a) Overlay network connecting nine servers. 
(b) Server $p_6$ tracking $p_0$'s message $m_0$ within a $G_S(9,3)$ digraph. Dotted red nodes indicate failed servers.
Dashed gray nodes indicate servers from which $p_6$ received failures notifications (i.e, dashed red edges). 
Solid green nodes indicate servers suspected to have $m_0$.}
\label{fig:tracking_in_gs}
\end{figure}

To illustrate the early termination mechanism we consider an example similar to the one in the \AC{} paper~\cite{poke2017allconcur}:
$n=9$ servers $(p_0,\ldots,p_8)$ connected through a $G_S(n,d)$ 
digraph~\cite{Soneoka:1996:DDC:227095.227101} with $d=3$ the digraph's degree (see Figure~\ref{fig:gs_n9_d3}).
The digraph is regular and optimally connected~\cite{Meyer:1988:FFG:47054.47067}, i.e., the vertex-connectivity equals the degree; hence, $f=2$. 
Figure~\ref{fig:tracking_digraph} shows the changes to the tracking digraph used by $p_6$ 
to track $p_0$'s message $m_0$ under the following failure scenario: $p_0$ fails after sending $m_0$ only to $p_5$, 
which receives $m_0$, but fails without sending it further. 
When $p_6$ receives the first notification of $p_0$'s failure, for example originating from $p_4$, 
it adds to the digraph all of $p_0$'s successors except for $p_4$, the sender of the notification---$p_6$ suspects that, 
before $p_0$ failed, it sent $m_0$ to all its successors, except $p_4$, which could not have received $m_0$ from $p_0$.
Had $p_4$ received $m_0$ from $p_0$, then $m_0$ would have arrived to $p_6$ before the failure notification (cf.~Proposition~\ref{prop:in_order_trans}).  
Also, $p_6$ tracks $m_0$ until it only suspects failed servers to have it; 
only then is $p_6$ sure no non-faulty server has $m_0$ and stops tracking it.

\section{\AC{}: design specification}
\label{sec:design}

We use TLA+~\cite{Lamport:2002:SST:579617} to provide a formal design specification of \AC{}~\cite{allconcur_tla}.
In addition to the number of servers $n$ and the fault tolerance 
$f$~(\cref{sec:abcast}),
we define $\mathit{S}$ to be the set of servers and $E$ the set of directed edges describing the overlay network; 
clearly, $\mathit{E} \subseteq \mathit{S} \times \mathit{S}$. 
To define the digraph 
$G$ (i.e., the overlay network), 
we use the \emph{Graphs} module~\cite{Lamport:2002:SST:579617}: 
$G$ is defined as a record whose $\mathit{node}$ field is $\mathit{S}$ and $\mathit{edge}$ field is $\mathit{E}$.
We assume that $G$'s vertex-connectivity is larger than $f$.

We split the design of \AC{} into three modules: 
(1) an atomic broadcast (AB) module; (2) a networking (NET) module; and (3) a failure detector (FD) module.
Figure~\ref{fig:design} illustrates the three modules together with the variables describing \AC{}'s state;
moreover, the arrows indicate the flow of information, e.g., 
receiving a failure notification updates the set $F[p]$, which leads to the update of the tracking digraphs in $g[p]$~(\cref{sec:ab_spec}). 
The AB module is the core of \AC{}'s design~(\cref{sec:ab_spec}). 
It exposes two interfaces at every server~$p \in \mathit{S}$: 
the input interface $\mathtt{Abcast}(p)$, to A-broadcast a message; 
and the output interface $\mathtt{Adeliver}(p)$, to A-deliver all known A-broadcast messages.
The AB module relies on the other two modules for interactions between 
servers:
the NET module~(\cref{sec:net_spec}) provides an interface for asynchronous message-based communication; 
and the FD module~(\cref{sec:fd_spec}) provides information about faulty servers~\cite{Chandra:1996:UFD:226643.226647}.

\begin{figure}[!tp]
\centering
\scalebox{0.85}{\input{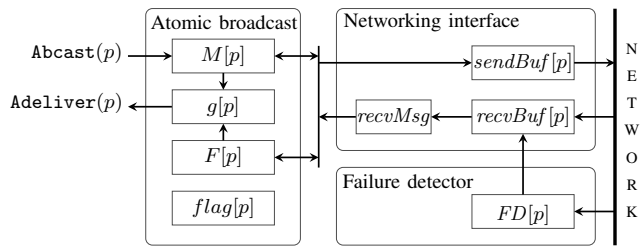}}
\caption{\AC{} system design from the perspective of a server $p$. Boxes depict the state (i.e., the variables), 
while arrows indicate the flow of information.}
\label{fig:design}
\end{figure}

\subsection{The atomic broadcast module}
\label{sec:ab_spec}

Let $p \in \mathit{S}$ be any server. 
Then, the state of the AB module is described by the values of four variables (see Figure~\ref{fig:design}):
(1)~$M[p]$ is the set of A-broadcast messages known by $p$; actually, $M[p]$ contains the messages owners, 
i.e., $M[p] \subseteq \mathit{S}$.
(2)~$g[p]$ is an array of $n$ tracking digraphs, one per server; 
the digraph $g[p][q]$ is used by $p$ to track the message A-broadcast by $q$.
(3)~$F[p]$ is an array of $n$ sets, one per server; 
the set $F[p][q]$ contains all servers from which $p$ received notifications of $q$'s failure.
(4)~$\mathit{flag}[p]$ is a record with three binary fields---$\mathit{nf}$ indicates whether $p$ is non-faulty, 
$\mathit{ab}$ indicates whether $p$ A-broadcast its message, and $\mathit{done}$ indicates whether $p$ terminated.
For ease of notation, we omit the $\mathit{flag}$ prefix and refer to the three flags as $\mathit{nf}[p]$, 
$\mathit{ab}[p]$ and $\mathit{done}[p]$, respectively.

\textbf{Initial state.}
Initially, all sets in both $M[p]$ and $F[p]$ are empty---$p$ neither knows of any A-broadcast message 
nor has received any failure notifications. For any $q \in \mathit{S}$, $g[p][q]$ contains 
only $q$ as node---$p$ suspects that each message is only known by their owner. 
Also, $p$ is initially non-faulty and it has neither A-broadcast its message nor terminated.

\begin{table*}[!tp]
\centering
\begin{tabular}{ l | c c c c c c c c c c }

  \cmidrule[1.5pt](){1-11}

   &
  $M$ & $g$ & $F$ & 
  $\mathit{nf}$ & $\mathit{ab}$ & $\mathit{done}$ &
  $\mathit{sendBuf}$ & $\mathit{recvBuf}$ & $\mathit{recvMsg}$ & $\mathit{FD}$ \\

  \hline
  
   \rowcolor{DarkGray}  
   $\mathtt{Abcast}(p)$ &
   $[p]$ & $[p][p]$ & - &   
   - & $[p]$ & - &   
   - & - & - & - \\
   
   $\mathtt{Adeliver}(p)$ &
   - & - & - &   
   - & - & $[p]$ &   
   - & - & - & - \\ 
   
   \rowcolor{DarkGray}  
   $\mathtt{RecvBCAST}(p,m)$ &
   $[p]$ & $[p][m.o]$ & - &   
   - & - & - &   
   - & - & - & - \\
   
   $\mathtt{RecvFAIL}(p,m)$ &
   - & $[p]$ & $[p][m.t]$ &   
   - & - & - &   
   - & - & - & - \\
   
   \rowcolor{DarkGray}  
   $\mathtt{Fail}(p)$ &
   - & - & - &   
   $[p]$ & - & - &   
   - & - & - & - \\
   
   
   $\mathtt{SendMsg}(p,\ldots)$ &
   - & - & - &   
   - & - & - &   
   $[p]$ & - & - & - \\
   
   \rowcolor{DarkGray}
   $\mathtt{TXMsg}(p)$ &
   - & - & - &   
   - & - & - &   
   $[p]$ & $[q \in p^+(G)]$ & - & - \\
   
   $\mathtt{DeliverMsg}(p)$ &
   - & - & - &   
   - & - & - &   
   - & $[p]$ & $[\ast]$ & - \\
   
   
   \rowcolor{DarkGray}  
   $\mathtt{DetectFail}(p,q)$ &
   - & - & - &   
   - & - & - &   
   - & $[p]$ & - & $[p][q]$ \\
   
  \cmidrule[1.5pt](){1-11} 
\end{tabular}
  \caption{The effect of the next-state relations on \AC{}'s state. 
  The brackets $[]$ indicate what elements of the variables are modified; 
  $[\ast]$ indicates that the entire variable is modified.
  Also, $p^+(G)$ denotes the set of successors of $p$ in $G$;
  $m.\mathit{o}$ denotes the server that first sent message $m$; 
  and $m.\mathit{t}$ denotes the server targeted by a failure notification $m$, i.e., the failed server.}
\label{tab:state_changes}
\end{table*}

\textbf{Next-state relations.}
The AB module defines six operators that specify all the possible state 
transitions (see Table~\ref{tab:state_changes}).
In addition to the two exposed interfaces, i.e., $\mathtt{Abcast}(p)$ and $\mathtt{Adeliver}(p)$,
$p$ can perform the following four actions: (1) receive a message, i.e., $\mathtt{ReceiveMessage}(p)$; 
(2) invoke the NET module for transmitting a 
message, i.e., $\mathtt{TXMsg}(p)$~(\cref{sec:net_spec});
(3) fail, i.e., $\mathtt{Fail}(p)$; and 
(4) invoke the FD module for detecting the failure of a 
predecessor $q \in \mathit{S}$, i.e., $\mathtt{DetectFail}(p,q)$~(\cref{sec:fd_spec}).

The $\mathtt{Abcast}(p)$ operator updates $M[p]$, $g[p]$ and $\mathit{ab}[p]$---it adds $p$ to $M[p]$; 
it removes all servers from $g[p][p].\mathit{node}$; and it sets the $\mathit{ab}[p]$ flag. 
Also, it sends $p$'s message further by invoking 
the $\mathtt{SendMsg}$ operator of the NET module~(\cref{sec:net_spec}).
The main precondition of the operator is that $p$ has not A-broadcast its message already; 
hence, a message can be A-broadcast at most once. 

The $\mathtt{Adeliver}(p)$ operator sets the $\mathit{done}[p]$ flag; 
as a result, $p$ can A-deliver the messages in $M[p]$ in a deterministic order. 
The main precondition of the operator is that all $p$'s tracking digraphs are empty,
i.e., $g[p][q].\mathit{node} = \emptyset,\,\forall q \in \mathit{S}$. 
In Section~\ref{sec:proof}, we show that this precondition is sufficient for safety. 

The $\mathtt{ReceiveMessage}(p)$ operator invokes 
the $\mathtt{DeliverMsg}$ operator of the NET module; 
as a result, the least recent message from the $\mathit{recvBuf}$ is stored into $\mathit{recvMsg}$~(\cref{sec:net_spec}). 
\AC{} distinguishes between A-broadcast messages and failure notifications. 
For both, the $o$ field indicates the server that first sent the message, i.e., the \emph{owner}; 
also, the $t$ field of a failure notification indicates the server suspected to have failed, i.e., the \emph{target}.
When $p$ receives a message $m$, it invokes one of the following operators---$\mathtt{RecvBCAST}(p, m)$
or $\mathtt{RecvFAIL}(p,m)$. 
To avoid resends, both operators are enabled only if $p$ has not already received~$m$. 

The $\mathtt{RecvBCAST}(p, m)$ operator updates both $M[p]$ and $g[p]$---it adds $m.o$ to $M[p]$; 
and it removes all servers from $g[p][m.o].\mathit{node}$. 
Also, it sends $m$ further by invoking 
the $\mathtt{SendMes\-sage}$ operator of the NET module~(\cref{sec:net_spec}).
In addition, if $p$ has not A-broadcast its message, receiving $m$ triggers the $\mathtt{Abcast}(p)$ operator.
One of the precondition of the operator is that $m$ was A-broadcast by its owner; although this condition is ensured 
by design, it facilitates the proof of the integrity property~(\cref{sec:proof}).

The $\mathtt{RecvFAIL}(p,m)$ operator updates both $F[p]$ and $g[p]$---it adds $m.o$ to $F[p][m.t]$ 
and updates every tracking digraph, in $g[p]$, that contains $m.t$. 
Updating the tracking digraphs is the core of \AC{}'s early termination mechanism and we describe it in details in Section~\ref{sec:td_update}.

The $\mathtt{Fail}(p)$ operator clears the $\mathit{nf}[p]$ flag.
As a result, all of $p$'s operators are disabled---\AC{} assumes a fail-stop model.
The main precondition of the operator is that less than $f$ servers have already failed. 

$\mathtt{TXMsg}(p)$ and $\mathtt{DetectFail}(p,q)$ are discussed in Section~\ref{sec:net_spec} and Section~\ref{sec:fd_spec}, respectively. 

\subsubsection{Updating the tracking digraphs}
\label{sec:td_update}

Tracking digraphs are trivially updated when $p$ adds an A-broadcast message to the set $M[p]$  
and, as a result, removes all the servers from the digraph used (by $p$) to track this message.
Updating the tracking digraphs becomes more involved when $p$ receives a failure notification. 
Let $p_t$ be the target and $p_o$ the owner of a received failure notification, i.e., $p_o$ detected $p_t$'s failure. 
Then, $p$ updates every tracking digraph, in $g[p]$, that contains $p_t$. Let $g[p][p_\ast]$ be such a digraph; if, after the update, $g[p][p_\ast]$ contains only servers known by $p$ to have failed, then 
it is completely pruned---$p$ is certain no non-faulty server can have $m_{\ast}$.
We specify two approaches to update $g[p][p_\ast]$. 
The first approach follows the algorithm described in the \AC{} paper~\cite{poke2017allconcur};
yet, due to its recursive specification, it is not suitable for the TLA+ Proof System (TLAPS)~\cite{tlaps}. 
The second approach constructs the tracking digraph from scratch, using the failure notifications from $F[p]$;
yet, it requires the TLA+ Model Checker~\cite{Lamport:2002:SST:579617} to enumerate all paths of a digraph.

\textbf{First approach---Recursive specification.}
Let $p_t^+(g[p][p_\ast])$ be the set of $p_t$'s successors in $g[p][p_\ast]$.
Then, if $p_t^+(g[p][p_\ast]) \neq \emptyset$, then $p$ already received another notification of $p_t$'s failure, 
which resulted in $p$ suspecting $p_t$ to have sent (before failing) $m_\ast$ to its other successors, including $p_o$; 
thus, after receiving the failure notification, $p$ removes the edge $(p_t,p_o)$ from $g[p][p_\ast]$.
Removing an edge, may disconnect some servers from the root $p_\ast$; intuitively, these servers cannot 
have received $m_\ast$ from any of the other suspected servers and thus, are removed from $g[p][p_\ast]$. 

Yet, if $p_t^+(g[p][p_\ast]) = \emptyset$ (i.e., this is the first notification of $p_t$'s failure $p$ receives),
then, we update $g[p][p_\ast]$ using a recursive function that takes two arguments---a FIFO queue $Q$ and a digraph $\mathit{td}$.
Initially, $Q$ contains all the edges (in $G$) connecting $p_t$ to its successors (except $p_o$) and $\mathit{td} = g[p][p_\ast]$.
Let $(x,y)$ be an edge extracted from $Q$. 
Then, if $y \in g[p][p_\ast]$,  (i.e., $y$ is suspected by $p$ to have $m_\ast$),
the edge $(x,y)$ is added to $\mathit{td}$ (i.e., $p$ suspects $y$ got $m_\ast$ from $x$). 
If $y \notin g[p][p_\ast]$, $y$ is also added to $\mathit{td}$ (i.e., $p$ suspects $y$ has $m_\ast$). 
In addition, if $F[p][y] \neq \emptyset$ (i.e., $y$ is known by $p$ to have failed),
the edges connecting $y$ to its successors (expect those known to have failed) 
are added to $Q$. Finally, the function is recalled with the updated~$Q$ and~$\mathit{td}$.
The recursion ends when $Q$ is empty.

\textbf{Properties of tracking digraphs.} 
From the recursive specification, we 
deduce the following four invariants that uniquely define 
a non-empty tracking digraph, denoted by $\mathtt{TD}(p, p_\ast)$:
\begin{enumerate}[($\text{I}_1$)]
\setlength{\itemsep}{0pt}
  \item it contains its root, i.e., $p_\ast \in \mathtt{TD}(p, p_\ast).\mathit{node}$;
  \item it contains all the successors of every server (in the digraph) known to have failed, 
  except those successors from which failure notifications were received, i.e., 
\begin{align*}
 &\forall \mathtt{TD}(p, p_\ast).\mathit{node} : F[p][q] \neq \emptyset \Rightarrow
  \forall q_s \in q^+(G) \setminus F[p][q] \,:\, q_s \in \mathtt{TD}(p, p_\ast).\mathit{node};
\end{align*}
  \item it contains only edges that connect a server (in the digraph) known to have failed
  to all its successors, except those successors from which failure notifications were received, i.e., 
\begin{align*}
  \forall (e_1,e_2) \in \mathtt{TD}(p, p_\ast).\mathit{edge} \,:\, (F[p][e_1] \neq \emptyset \wedge e_2 \in e_1^+(G) \setminus F[p][e_1]);
\end{align*}   
  \item it contains only servers that are either the root or 
  the successor of another server (in the digraph) known to have failed, 
  except those successors from which failure notifications were received, i.e., 
\begin{align*}
 \forall q \in \mathtt{TD}(p, p_\ast).\mathit{node} \,:\, (q = p_\ast \vee (\exists q_p \in \mathtt{TD}(p, p_\ast).\mathit{node} \,:\, 
          &\wedge F[p][q_p] \neq \emptyset \\
          &\wedge q \in q_p^+(G) \setminus F[p][q_p])).
\end{align*}   
\end{enumerate}

The intuition behind invariant $\text{I}_1$ is straightforward---while tracking $p_\ast$'s message, 
$p$ always suspects $p_\ast$ to have it.
Invariants $\text{I}_2$ and $\text{I}_3$ describe how a tracking digraph expands. 
The successors of any server $q$, which is both suspected to have $m_\ast$ and known to have failed, 
are suspected (by $p$) to have received $m_\ast$ directly from $q$ before it failed. 
Yet, there is one exception---successors from which $p$ already received notifications of $q$'s failure.
Receiving a notification of $q$'s failure before receiving $m_\ast$ entails the sender 
of the notification could not have received $m_\ast$ directly from $q$ (cf.~Proposition~\ref{prop:in_order_trans}). 

\begin{figure}
\captionsetup[subfigure]{justification=centering}
\centering
\subcaptionbox{$G_S(9,3)$\label{fig:gs_n9_d3_again}} {
\includegraphics[width=.20\textwidth]{figures/gs_n9_d3.png}
}
\qquad\qquad
\subcaptionbox{$\bar{K}_9(p_6)$\label{fig:k9_p6}} {
\includegraphics[width=.20\textwidth]{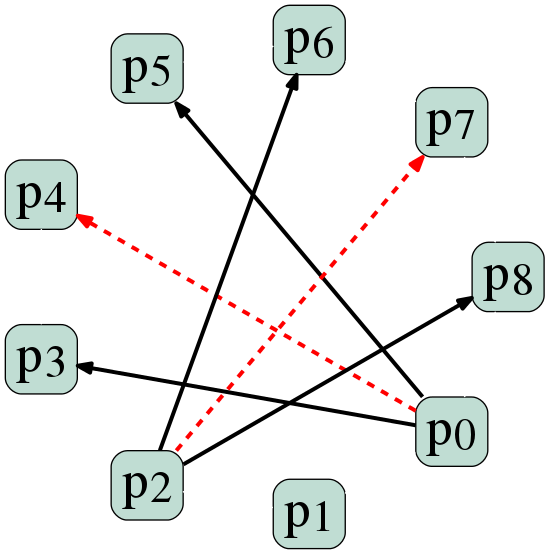}
}
\qquad\qquad
\subcaptionbox{$g[p_6][p_0]$\label{fig:g_p6_p0}} {
\includegraphics[width=.20\textwidth]{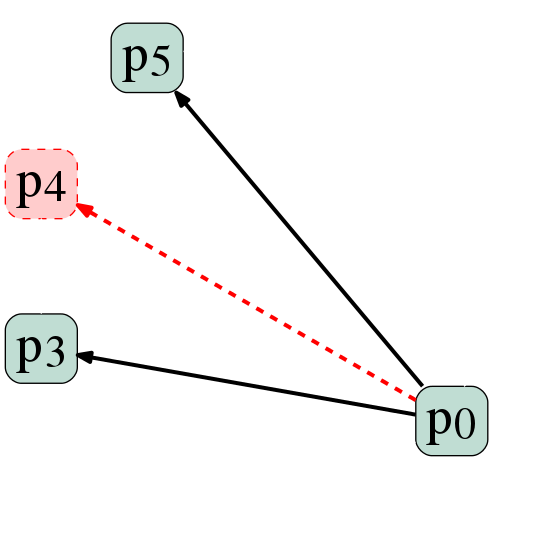}
}
\caption{(a) An overlay network connecting nine servers; (b) a digraph that satisfies $\text{I}_1$, $\text{I}_2$ and $\text{I}_3$; and (c) the digraph used by $p_6$ to track $m_0$.
Both digraphs consider nine servers connected through a $G_S(9,3)$ digraph~\cite{Soneoka:1996:DDC:227095.227101}
and are based on two failure notifications received by $p_6$ (indicated by dashed edges), 
one from $p_4$ indicating $p_0$'s failure and another from $p_7$ indicating $p_2$'s failure. 
Note that $p_4 \notin g[p_6][p_0]$.}
\label{fig:td_gs}
\end{figure}

$\text{I}_1$, $\text{I}_2$ and $\text{I}_3$ are necessary but not sufficient for a non-empty digraph 
to be a tracking digraph. 
As an example, we consider nine servers connected through a $G_S(9,3)$ digraph~\cite{Soneoka:1996:DDC:227095.227101} (see Figure~\ref{fig:gs_n9_d3_again}).
While $p_6$ is tracking $m_0$, it receives two notification, 
one from $p_4$ indicating $p_0$'s failure and another from $p_7$ indicating $p_2$'s failure; 
i.e., $F[p_6][p_0]=\{p_4\}$ and $F[p_6][p_2]=\{p_7\}$. Clearly, the digraph $\bar{K}_9(p_6)$ illustrated in Figure~\ref{fig:k9_p6}
satisfies the first three invariants. 
Yet, there is not reason for $p_6$ to suspect that $p_2$ has $m_0$. 

For sufficiency, invariant $\text{I}_4$ is needed. 
Together with $\text{I}_3$, $\text{I}_4$ requires that $p$ suspects only those servers that are connected 
through failures to the root $p_\ast$. In other words, for $p$ to suspect a server $q$, there must be a 
sequence of servers starting with $p_\ast$ and ending with $q$ such that every server preceding $q$ 
is both known to have failed and suspected to have sent $m_\ast$ to the subsequent server.
Note that a server is always connected through failures to itself.
Figure~\ref{fig:g_p6_p0} shows the actual digraph used by $p_6$ to track $m_0$ in the above example. 

\textbf{Second approach---TLAPS specification.} 
We use the above invariants to provide a non-recursive specification of a non-empty tracking digraph.
Let $K_n$ be a complete digraph with $n$ nodes; clearly, $K_n$ satisfies $\text{I}_1$ and $\text{I}_2$.
Let $\bar{K}_n(p)$ be a digraph obtained from $K_n$ by removing all edges not  
satisfying~$\text{I}_3$ (see Figure~\ref{fig:k9_p6} for nine servers connected through a $G_S(9,3)$ digraph).
Then, the set $\mathtt{TD}(p, p_\ast).\mathit{node}$ contains any node in $\bar{K}_n(p)$ that is either $p_\ast$
or (according to $p$) is connected to $p_\ast$ through failures, i.e.,
\begin{align}
\label{eq:td_nodes}
\{q \in \bar{K}_n(p).\mathit{node} \,:\, 
      &\vee q = p_\ast \\ 
      &\vee \exists \pi_{p_{\ast},q}(\bar{K}_n(p)) \,:\, 
            F[p][x] \neq \emptyset,\, \forall x \in \pi_{p_\ast,q}(\bar{K}_n(p)) \setminus \{q\} \},\nonumber
\end{align}
where $\pi_{p_\ast,q}(\bar{K}_n(p))$ is a path in $\bar{K}_n(p)$ from $p_\ast$ to $q$.
Note that when removing a node from $\bar{K}_n(p)$ we also remove all the edges incident on that node. 
Using TLAPS, we prove that this specification satisfies all four invariants~\cite{allconcur_tla}.

\subsection{The networking module}
\label{sec:net_spec}

The NET module specifies an interface for asynchronous message-based communication; 
the module assumes that servers communicate through an overlay network. 
The interface considers three constants: (1) $\mathit{S}$, the set of servers;
(2) $G$, the digraph that describes the overlay network; 
and (3) $\mathit{Message}$, the set of existing messages. 
We assume that every message has a field \emph{o} indicating the server that first sent it. 

Let $p \in \mathit{S}$ be any server. 
Then, the state of the NET module is described by the values of three variables (see Figure~\ref{fig:design}):
(1) $\mathit{sendBuf}[p]$, is $p$'s sending buffer;
(2) $\mathit{recvBuf}[p]$, is $p$'s receiving buffer; 
and (3) $\mathit{recvMsg}$, is the latest received message.
Note that while $\mathit{recvBuf}[p]$ is a sequence of received messages, 
$\mathit{sendBuf}[p]$ is a sequence of tuples, with each tuple consisting 
of a message and a sequence of destination servers. 
Also, both buffers act as FIFO queues. 
In the initial state, the buffers are empty sequences; the initial value of $\mathit{recvMsg}$ is irrelevant.

\textbf{Next-state relations.}
The NET module defines three operators that specify all the possible state 
transitions (see Table~\ref{tab:state_changes}).
The operators consists of the three main actions performed in message-based 
communication---sending, transmitting and delivering a message. 
To describe the three operators, let $\mathit{msgs}$ be a sequence of messages 
and $\mathit{nf}$ a mapping $\mathit{S} \rightarrow \{0,1\}$ indicating the non-faulty servers.

The $\mathtt{SendMsg}(p,\,\mathit{msgs},\,\mathit{nf})$ operator, 
updates $\mathit{sendBuf}[p]$ by appending tuples consisting of messages from $\mathit{msgs}$ with their destinations; 
for every message $m$, the set of destinations consist of $p$'s non-faulty successors, except for $m.o$. 
Note that the $\mathtt{SendMsg}$ operator has no precondition.

The $\mathtt{TXMsg}(p)$ operator sends $m$, the next message from $\mathit{sendBuf}[p]$, to $q$, one of $m$'s destinations; 
$m$'s sequence of destinations is updated by removing $q$; when there are no more destinations, 
$m$ is removed from $\mathit{sendBuf}[p]$. Also, $m$ is appended to $\mathit{recvBuf}[q]$. 
As a precondition, the send buffer of $p$ must not be empty.

The $\mathtt{DeliverMsg}(p)$ operator updates $\mathit{recvBuf}[p]$ by removing a message (i.e., the least-recent received)
and storing it in $\mathit{recvMsg}$. Note that $\mathit{recvMsg}$ is only a temporary variable used by the AB module to access 
the delivered message (i.e., the $\mathtt{ReceiveMessage}$ operator). 
As a precondition, the receive buffer of $p$ must not be empty.

\subsection{The failure detector module}
\label{sec:fd_spec}

The FD module provides information about faulty servers. It specifies a FD that guarantees both 
completeness and accuracy, i.e., a perfect FD~\cite{Chandra:1996:UFD:226643.226647}.
The specification assumes a heartbeat-based FD with local detection: 
Every server sends heartbeats to its successors; once it fails, its successors detect the lack of heartbeats. 
The module considers two constants: (1) $\mathit{S}$, the set of servers; 
and (2) $G$, the digraph that describes the overlay network. 

Let $p \in \mathit{S}$ be any server. 
Then, the state of the FD module is described by the values of one variable 
(see Figure~\ref{fig:design})---$\mathit{FD}[p]$, is an array of $n$ flags, one per server; 
$\mathit{FD}[p][q]$ indicates whether $p$ has detected $q$'s failure. Clearly, $\mathit{FD}[p][q]=1 \Rightarrow q \in p^+(G)$. 
Initially, all flags in $\mathit{FD}[p]$ are cleared. 

\textbf{Next-state relation.}
The FD module defines only one operator, $\mathtt{DetectFail}(p,q)$, 
that specifies the state transition when $p$ detects $q$'s failure; 
i.e., the $\mathit{FD}[p][q]$ flag is set (see Table~\ref{tab:state_changes}).
The operator has a set of preconditions. First, $p$ must be both non-faulty and a successor of $q$. 
Second, $q$ must be faulty, i.e., $\mathit{nf}[q]=0$; this condition 
guarantees the accuracy property required by a perfect FD~\cite{Chandra:1996:UFD:226643.226647}.

Once a failure is detected, the AB module must be informed. 
The FD module invokes the NET module to append a notification of $q$'s failure to $\mathit{recvBuf}[p]$ (see Figure~\ref{fig:design}).
This ensures that any A-broadcast messages sent by $q$ to $p$ that were already transmitted (i.e., added to $\mathit{recvBuf}[p]$) 
are delivered by $p$ before its own notification of $q$'s failure. 
Thus, Proposition~\ref{prop:in_order_trans} holds: 
If $p$ receives from $q_s \in q^+(G)$ a notification of $q$'s failure, 
then $q_s$ has not received from $q$ any message that $p$ did not already receive. In the above scenario,~$q_s = p$.

\subsection{Safety and liveness properties}
\label{sec:safe_and_live}

Using the above specification, we define both safety and liveness properties. 
First, \AC{} relies on a perfect FD for detecting faulty servers; 
hence, it guarantees both accuracy and completeness~\cite{Chandra:1996:UFD:226643.226647}. 
\emph{Accuracy} is a safety property:
It requires that no server is suspected to have failed before actually failing, i.e., 
\begin{align*}
 \forall q \in \mathit{S} : \mathit{nf}[q] = 1 \Rightarrow \forall p \in \mathit{S} \,:\, \mathit{FD}[p][q]=0.
\end{align*}
\emph{Completeness} is a liveness property: It requires that all failures are eventually detected, i.e., 
\begin{align*}
 \forall q \in \mathit{S} : \mathit{nf}[q] = 0 \rightsquigarrow 
        (\forall p \in q^+(G) \,:\, \mathit{nf}[p] = 1  \Rightarrow \mathit{FD}[p][q]=1),
\end{align*}
where $X \rightsquigarrow Y$ asserts that whenever $X$ is true, $Y$ is eventually true~\cite{Lamport:2002:SST:579617}.

Second, any atomic broadcast algorithm must satisfy four properties---validity, 
agreement, integrity, and total order~\cite{Hadzilacos:1994:MAF:866693, Chandra:1996:UFD:226643.226647}. 
Integrity and total order are safety properties. 
\emph{Integrity} requires for any message $m$, every non-faulty server to A-deliver $m$ at most once, 
and only if $m$ was previously A-broadcast by its owner $q$, i.e., 
\begin{align*}
 \forall p \in \mathit{S} : \mathit{nf}[p] = 1 \Rightarrow \forall q \in M[p] \,:\, \mathit{ab}[q] = 1.
\end{align*}
Note that the requirement that a server A-delivers $m$ at most once is ensured by construction, i.e., $M[p]$ is a set.

\emph{Total order} asserts that if two non-faulty servers $p$ and $q$ A-deliver messages $m_1$ and $m_2$, 
then $p$ A-delivers $m_1$ before $m_2$, if and only if $q$ A-delivers $m_1$ before $m_2$. 
Since $p$ A-delivered messages in $M[p]$ in a deterministic order, we replace total order with \emph{set agreement}:
Let $p$ and $q$ be any two non-faulty servers, then, after termination, $M[p] = M[q]$, i.e., 
\begin{align*}
 \forall p,q \in \mathit{S} : (\mathit{nf}[p] = 1 \wedge \mathit{done}[p] = 1 \wedge \mathit{nf}[q] = 1 \wedge \mathit{done}[q] = 1) \Rightarrow M[p] = M[q].
\end{align*}

Validity and agreement are liveness properties. 
\emph{Validity} asserts that if a non-faulty server A-broadcasts a message, then it eventually A-delivers it, i.e., 
\begin{align*}
&\forall p \in \mathit{S} : ( \Box(\mathit{nf}[p] = 1) \wedge \mathit{ab}[p] = 1) \rightsquigarrow \mathit{a\text{-}deliver}(p,p),
\end{align*}
where $\Box X$ asserts that $X$ is always true~\cite{Lamport:2002:SST:579617};
also, $\mathit{a\text{-}deliver}(p,q)=q \in M[p] \wedge \mathit{done}[p] = 1$ 
asserts the conditions necessary for $p$ to A-deliver the message A-broadcast by $q$.

\emph{Agreement} asserts that if a non-faulty server A-delivers a message A-broadcast by any server, 
then all non-faulty servers eventually also A-deliver the message. 
\begin{align*}
\forall p,q,s \in \mathit{S} : &( \Box(\mathit{nf}[p] = 1) \wedge \mathit{a\text{-}deliver}(p,s)) \rightsquigarrow (\mathit{nf}[q] = 1 \Rightarrow \mathit{a\text{-}deliver}(q,s)).
\end{align*}

To verify that all the above properties hold, we use the TLA+ Model Checker~\cite{Lamport:2002:SST:579617}, 
hence, the need for a tracking digraph specification that does not enumerate all paths of a digraph~(\cref{sec:td_update}).
For a small number of servers, e.g., $n=3$, the model checker can do an exhaustive search of all reachable states. 
Yet, for larger values of $n$ the exhaustive search becomes intractable. As an alternative, we use the model checker 
to randomly generate state sequences that satisfy both the initial state and the next-state relations. 
In the model, we consider the overlay network is described by a $G_S(n,d)$ digraph~\cite{Soneoka:1996:DDC:227095.227101}.
When choosing $G_S(n,d)$'s degree, i.e., its fault tolerance~(\cref{sec:allconcur}), 
we require a reliability target of 6-nines; the reliability is estimated over a period of $24$ hours 
according to the data from the TSUBAME2.5 system failure history~\cite{Sato:2012:DMN:2388996.2389022,tsubame}, 
i.e., server $\mathit{MTTF}\approx2$ years.

In addition, we use TLAPS to formally prove the safety properties---the 
FD's accuracy and the atomic broadcast's integrity and set agreement~(\cref{sec:proof}).
TLAPS does not allow for liveness proofs. However, validity and agreement require \AC{} to terminate; 
termination is informally proven in the \AC{} paper~\cite{poke2017allconcur}.

\section{\AC{}: formal proof of safety}
\label{sec:proof}

Atomic broadcast has two safety properties---integrity and total order. In \AC{}, total order can be replaced 
by set agreement~(\cref{sec:safe_and_live}). In addition, \AC{} relies on a perfect FD for information about 
faulty servers; thus, for safety, accuracy must also hold. 
For all three safety properties, we use TLAPS~\cite{Lamport:2002:SST:579617} to provide mechanically verifiable proofs~\cite{allconcur_tla}.
All three proofs follow the same pattern: We consider each property to be an invariant that holds for the initial state 
and is preserved by the next-state relations. 

\textbf{Accuracy.}
The FD's accuracy is straightforward to prove. Initially, all flags in $\mathit{FD}$ are cleared; hence, the property holds. 
Then, according to the specification of the $\mathtt{DetectFail}$ operator~(\cref{sec:fd_spec}), 
setting the flag $\mathit{FD}[p][q]$ for $\forall p,q \in \mathit{S}$ is preconditioned by $q$ previously 
failing, i.e., $\mathit{nf}[q]=0$. Moreover, due to the fail-stop assumption, a faulty 
server cannot become subsequently non-faulty. As a result, accuracy is preserved by the next-state relations.

\textbf{Integrity.}
Integrity is also straightforward to prove.  Initially, all the sets in $M$ are empty; hence, the property holds.
Then, the only two operators that update $M$ are $\mathtt{Abcast}$ and $\mathtt{RecvBCAST}$ (see Table~\ref{tab:state_changes}). 
The $\mathtt{Abcast}(p)$ operator adds $p$ to $M[p]$ and also sets the $\mathit{ab}[p]$ flag; hence, integrity 
is preserved by $\mathtt{Abcast}$. The $\mathtt{RecvBCAST}(p, m)$ operator adds $m.o$ to $M[p]$. 
Clearly, receiving a message $m$ entails the existence of a path from $m.o$ to $p$ such that each server on 
the path has $m$ in its $M$ set (see Lemma~\ref{lemma:MessagePath}); this also includes $m.o$.
When $m.o$ adds its message $m$ to $M[m.o]$, it also sets the $\mathit{ab}[m.o]$ flag (according to the $\mathtt{Abcast}$ operator). 
Thus, integrity is preserved also by  $\mathtt{RecvBCAST}$. 
In practice, to simplify the TLAPS proof, we introduce an additional precondition 
for the $\mathtt{RecvBCAST}(p, m)$ operator---$\mathit{ab}[m.o]=1$~(\cref{sec:ab_spec}).

\subsection{Set agreement}
The set agreement property is the essence of \AC{}---it guarantees the total order of broadcast messages. 
Clearly, set agreement holds in the initial state, since all $\mathit{done}$ flags are cleared. 
Moreover, the $\mathit{done}$ flags are set only by the $\mathtt{Adeliver}$ 
operator (see Table~\ref{tab:state_changes}); 
thus, we only need to prove set agreement is preserved by $\mathtt{Adeliver}$.
We follow the informal proof provided in the \AC{} paper~\cite{poke2017allconcur}. 
We introduce the following lemmas:

\begin{lemma}
\label{lemma:MessagePath}
Let $p$ be a server that receives $p_\ast$'s message $m_\ast$; then, there is a path (in G) from $p_\ast$ to $p$ 
such that every server on the path has received $m_\ast$ from its predecessor on the path, i.e., 
\begin{subequations}
\begin{align}
&\forall p, p_\ast \in \mathit{S} \,:\, p_\ast \in M[p] \Rightarrow \exists \pi_{p_\ast,p}(G)=\left(a_{1},\ldots,a_{\lambda}\right) \,:\, \nonumber\\
\label{eq:message_path_eq1}
&\quad\quad \wedge \forall k \in \{1,\ldots,\lambda\} \,:\, p_\ast \in M[a_k] \\
\label{eq:message_path_eq2}
&\quad\quad \wedge \forall q \in \mathit{S} : (\exists  k \in \{1,\ldots,(\lambda-1)\} : a_{k+1} \in F[q][a_k]) \Rightarrow p_\ast \in M[q].
\end{align}
\end{subequations}
\end{lemma}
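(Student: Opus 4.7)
The plan is to prove this as a state invariant by induction on the sequence of state transitions in \AC{}, constructing the witnessing path as a record of the actual forwarding chain of $m_\ast$. In the initial state $M[p]=\emptyset$ for every $p\in \mathit{S}$, so the premise is false and the lemma holds vacuously; for the inductive step I would consider each next-state relation and show that the existence of a path satisfying (\ref{eq:message_path_eq1}) and (\ref{eq:message_path_eq2}) is preserved.

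Only $\mathtt{Abcast}$ and $\mathtt{RecvBCAST}$ can introduce new pairs $(p,p_\ast)$ with $p_\ast\in M[p]$, and only $\mathtt{RecvFAIL}$ can make condition (\ref{eq:message_path_eq2}) newly relevant (by enlarging some $F[q][\cdot]$); every other operator trivially preserves the invariant since neither $M$ nor $F$ is changed. For $\mathtt{Abcast}(p)$, the only new pair is $(p,p)$: the singleton path $(p)$ satisfies (\ref{eq:message_path_eq1}) immediately and (\ref{eq:message_path_eq2}) vacuously (no edges); pre-existing pairs are preserved because $M$ is monotone and $F$ is unchanged. For $\mathtt{RecvBCAST}(p,m)$ the only new pair is $(p,m.o)$. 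The message $m$ was delivered from $\mathit{recvBuf}[p]$, having been appended there by a prior $\mathtt{TXMsg}(q)$ for some $q\in p^-(G)$ that, at the time it executed $\mathtt{SendMsg}$, already satisfied $m.o\in M[q]$; by monotonicity $m.o\in M[q]$ still holds now. Applying the inductive hypothesis to $(q,m.o)$ gives a path $\pi$ from $m.o$ to $q$ satisfying both clauses; extending it with the overlay edge $(q,p)$ yields the required path to $p$. Clause (\ref{eq:message_path_eq1}) is immediate. For clause (\ref{eq:message_path_eq2}) on the new edge $(q,p)$, I need $m.o\in M[q']$ whenever $p\in F[q'][q]$: this uses Proposition~\ref{prop:in_order_trans}, because $p$ already has $m.o\in M[p]$ and so every failure notification about $q$ that $p$ forwards is sent after $p$ forwards $m$, hence any $q'$ that receives such a notification from $p$ has received $m$ beforehand by FIFO.

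For $\mathtt{RecvFAIL}(p,m)$, which adds $m.o$ to $F[p][m.t]$, condition (\ref{eq:message_path_eq1}) is trivially preserved. The only new instance of (\ref{eq:message_path_eq2}) that could fail is one where some witnessing path $\pi_{p_\ast,p'}$ contains the edge $(m.t,m.o)$, in which case I must establish $p_\ast\in M[p]$. By clause (\ref{eq:message_path_eq1}) applied to $\pi$, $p_\ast\in M[m.o]$, i.e.\ $m.o$ has received $p_\ast$'s message $m_\ast$. Because the path was built from actual direct transmissions in the $\mathtt{RecvBCAST}$ case, the edge $(m.t,m.o)$ means $m.o$ received $m_\ast$ directly from $m.t$ (in particular, before $m.t$ failed and before $m.o$ detected that failure). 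Since $m.o$ is the owner of the failure notification $p$ just processed, $m.o$ subsequently reliably broadcast fail$(m.t)$; Proposition~\ref{prop:in_order_trans} then yields that $p$ received $m_\ast$ prior to fail$(m.t)$, so $p_\ast\in M[p]$ at the current state.

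The main obstacle is the $\mathtt{RecvFAIL}$ step, and it hinges on an auxiliary invariant I would have to carry alongside the lemma: that each edge on a witnessing path records an actual direct transmission along that overlay edge. The lemma as stated is silent about $\mathit{sendBuf}$ and $\mathit{recvBuf}$, but to invoke Proposition~\ref{prop:in_order_trans} one needs exactly this history information, so a TLAPS formalization will almost certainly need auxiliary invariants on the buffers (e.g.\ that any A-broadcast message $m$ residing in $\mathit{sendBuf}[q]$ or $\mathit{recvBuf}[q]$ was originally A-broadcast by $m.o$, and that $\mathtt{RecvFAIL}$ on $p$ can only fire after any $m$ that was in $\mathit{recvBuf}[p]$ ahead of fail$(m.t)$ has already been processed). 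With those auxiliaries in hand, each case reduces to a short FIFO/monotonicity argument.
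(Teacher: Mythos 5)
Your proposal is correct and follows essentially the same route as the paper: the paper's own proof is a three-sentence sketch that takes the witnessing path to be the actual forwarding chain of $m_\ast$, calls Equation~\eqref{eq:message_path_eq1} straightforward, and derives Equation~\eqref{eq:message_path_eq2} from Proposition~\ref{prop:in_order_trans}, exactly the two pillars of your argument. Your explicit induction over the next-state relations and the auxiliary buffer invariants you flag are precisely the elaboration the paper defers to its TLAPS development, so you have simply made the intended proof more rigorous rather than taken a different path.
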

\begin{proof}
Equation~\eqref{eq:message_path_eq1} is straightforward. 
Equation~\eqref{eq:message_path_eq2} ensures that every server on the path (except $p_\ast$) received $m_\ast$ from its predecessor. 
Any server $q$ that received a failure notification from a server on the path (except $p_\ast$) targeting its predecessor 
on the path also received $m_\ast$ (cf. Proposition~\ref{prop:in_order_trans}).  
\end{proof}

\begin{lemma}
\label{lemma:noDisconnect}
Let $p$ be a non-faulty server that does not receive $p_\ast$'s message $m_\ast$; 
let $\pi_{p_\ast,a_i}(G) = \left(a_{1},\ldots,a_{i}\right)$ be a path (in $G$)
on which $a_i$ receives $m_\ast$; 
let $\left(a_{1},\ldots,a_{i}\right)$ be also a path in $g[p][p_\ast]$. 
Then, $\mathit{done}[p] = 1 \Rightarrow (\forall q \in g[p][p_\ast].\mathit{node} \,:\, F[p][q] \neq \emptyset)$.
\end{lemma}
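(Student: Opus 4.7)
My plan is to prove the statement as an inductive invariant of the TLA+ specification, verifying it in the initial state (where it holds vacuously, since $\mathit{done}[p] = 0$ everywhere) and checking preservation under each operator in Table~\ref{tab:state_changes}. The pivotal step is $\mathtt{Adeliver}(p)$, the only operator that can set $\mathit{done}[p] = 1$; its precondition requires $g[p][q].\mathit{node} = \emptyset$ for every $q \in \mathit{S}$, so the universally-quantified conclusion holds vacuously at the instant of the transition.

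For preservation under subsequent operators, the only non-trivial cases are $\mathtt{RecvFAIL}(p, m)$ and $\mathtt{DetectFail}(p, q)$, which can enlarge $F[p]$ and hence potentially extend $g[p][p_\ast]$. Under the recursive specification of Section~\ref{sec:td_update}, any expansion of a tracking digraph is seeded from a vertex already present in it; starting from an empty digraph, no new vertex can ever be admitted, so preservation is immediate. Under the non-recursive TLAPS specification~\eqref{eq:td_nodes} the argument is less direct, and I would prove an auxiliary monotonicity claim that a state satisfying ``every vertex in $g[p][p_\ast]$ is known failed'' is preserved by every $F[p]$-augmenting step, using invariant~$\text{I}_4$ to trace any newly admitted vertex back through failed predecessors and Proposition~\ref{prop:in_order_trans} to conclude that the new vertex is itself already known failed.

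The main obstacle is precisely this reconciliation between the recursive and the set-comprehension specifications of the tracking digraph: ensuring that the ``pruned'' status certified at $\mathtt{Adeliver}$ is consistently propagated as $F[p]$ continues to grow, which is where the bulk of the TLAPS effort will sit. Once the invariant is in place, the additional hypotheses $(a_1, \ldots, a_i)$ do not participate in the invariant proof itself but are exactly what is needed for the downstream set-agreement argument: combined with invariant~$\text{I}_3$, they will force each $a_k$ with $k < i$ to be failed, and the conclusion of the present lemma will then force the holder $a_i$ of $m_\ast$ to be failed as well, contributing to the $f+1$-failure contradiction that closes the set-agreement proof.
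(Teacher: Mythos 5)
There is a genuine gap: your proof establishes the lemma only in a vacuous form and thereby loses its mathematical content. You correctly observe that the precondition of $\mathtt{Adeliver}(p)$ forces $g[p][q].\mathit{node}=\emptyset$ for all $q$, so the literal implication $\mathit{done}[p]=1 \Rightarrow (\forall q \in g[p][p_\ast].\mathit{node} : F[p][q]\neq\emptyset)$ holds over an empty domain. But the lemma is invoked later (in Lemma~\ref{lemma:ak_in_g} and Theorem~\ref{th:set_agreement}) to assert that $p$ \emph{cannot remove} the path servers $a_1,\ldots,a_i$ from $g[p][p_\ast]$ before it has received failure notifications for every node then present in the digraph --- a statement about the state of $g[p][p_\ast]$ at the moment of removal, not after the digraph has been emptied. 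A vacuously true invariant cannot support that use. The tell is your own remark that the hypotheses about the path $(a_1,\ldots,a_i)$ ``do not participate in the invariant proof itself'': in the paper's argument they are the crux. The intended proof is a case analysis on the only three ways a server $a_j$ can leave $g[p][p_\ast]$: (1) $p_\ast \in M[p]$, excluded by hypothesis; (2) $a_j$ becomes disconnected from the root $p_\ast$, which requires deleting some edge $(a_l,a_{l+1})$ of the path, which by the specification happens only when $a_{l+1}\in F[p][a_l]$, and Equation~\eqref{eq:message_path_eq2} of Lemma~\ref{lemma:MessagePath} then forces $p_\ast\in M[p]$, a contradiction; (3) complete pruning, i.e., $F[p][q]\neq\emptyset$ for every node $q$ of the digraph --- the desired conclusion. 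Without this case analysis, nothing prevents (from the point of view of your invariant) the path from being silently disconnected and discarded, which is exactly the failure mode the lemma is meant to exclude.

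Two smaller points. First, your preservation argument for $\mathtt{RecvFAIL}$ and $\mathtt{DetectFail}$ after $\mathit{done}[p]$ is set is moot: $\mathtt{RecvFAIL}$ only touches tracking digraphs that contain the failed target, so an empty digraph stays empty, and no auxiliary monotonicity claim via $\text{I}_4$ or Proposition~\ref{prop:in_order_trans} is needed there. Second, your sketch of the downstream use (``the $f+1$-failure contradiction'') does not match Theorem~\ref{th:set_agreement}: the paper instead uses the FD's accuracy to conclude $F[q][a_k]=\emptyset$ for the non-faulty $a_k$, which rules out the pruning case, forces the disconnection case, and hence yields $p_\ast\in M[q]$ --- contradicting the assumption $p_\ast\notin M[q]$.
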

\begin{proof}
A necessary condition for $p$ to terminate is to remove every server 
$a_j,\,\forall 1 \leq j \leq i$ from $g[p][p_\ast]$.
According to \AC{}'s specification~(\cref{sec:td_update}), 
a server $a_j$ can be removed from $g[p][p_\ast]$ in one of the following scenarios:
(1) $p_\ast \in M[p]$; (2) $\nexists \pi_{p_\ast,a_j}(g[p][p_\ast])$; 
and (3) $\forall q \in g[p][p_\ast].\mathit{node} \,:\, F[p][q] \neq \emptyset$.
Clearly, $p_\ast \notin M[p]$. Also, $\nexists \pi_{p_\ast,a_j}(g[p][p_\ast])$ 
entails at least the removal of an edge from the $\left(a_{1},\ldots,a_{j}\right)$ path.
Let $(a_{l},a_{l+1}), 1 \leq l < j$ be one of the removed edges. Then, $a_{l+1} \in F[p][a_{l}]$, 
which entails $p_\ast \in M[p]$ (cf.~Lemma~\ref{lemma:MessagePath}). 
Thus, $p$ terminates only if $\forall q \in g[p][p_\ast].\mathit{node} \,:\, F[p][q] \neq \emptyset$.
\end{proof}

\begin{lemma}
\label{lemma:ak_in_g}
Let $p,q$ be two non-faulty servers such that $p_\ast \in M[p]$, but $p_\ast \notin M[q]$;
let $\pi_{p_\ast,p}(G) = \left(a_{1},\ldots,a_{\lambda}\right)$ be the path on which $p$ receives $m_\ast$; 
let $a_k$ be a server on $\pi_{p_\ast,p}(G)$ such that 
$\mathit{nf}[a_k] = 1$ and $\mathit{nf}[a_i] = 0,\,\forall 1 \leq i < k$.
Then, $\left(a_{1},\ldots,a_{i}\right),\,\forall 1 \leq i \leq k$ is a path in $g[q][p_\ast]$, 
i.e., $\left(a_{1},\ldots,a_{i}\right) \in g[q][p_\ast]$.
\end{lemma}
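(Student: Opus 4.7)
The plan is to proceed by induction on $i$ from $1$ up to $k$, exploiting the characterizing invariants $\text{I}_1$--$\text{I}_4$ of the tracking digraph (equivalently, the non-recursive specification in equation~\eqref{eq:td_nodes}) together with Lemma~\ref{lemma:MessagePath}. The base case $i=1$ is immediate: since $a_1 = p_\ast$, invariant $\text{I}_1$ gives $p_\ast \in g[q][p_\ast].\mathit{node}$, and a one-element sequence is trivially a path in any digraph containing that node.

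For the inductive step, I would assume $(a_1, \ldots, a_i)$ is a path in $g[q][p_\ast]$ for some $1 \le i < k$ and extend it by one node. Three facts are available. First, the hypothesis $i < k$ gives $\mathit{nf}[a_i] = 0$, so $a_i$ has failed. Second, $(a_i, a_{i+1})$ is an edge of $\pi_{p_\ast, p}(G)$, so $a_{i+1} \in a_i^+(G)$. Third, and crucially, the contrapositive of equation~\eqref{eq:message_path_eq2} of Lemma~\ref{lemma:MessagePath} applied with $p_\ast \notin M[q]$ rules out $a_{i+1} \in F[q][a_i]$, since otherwise $q$ would have received $m_\ast$. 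Combining these with invariant $\text{I}_2$ applied at the node $a_i$ places $a_{i+1}$ in $g[q][p_\ast].\mathit{node}$, and invariant $\text{I}_3$ places the edge $(a_i, a_{i+1})$ in $g[q][p_\ast].\mathit{edge}$, \emph{provided} $F[q][a_i] \neq \emptyset$.

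The main obstacle is precisely this side condition: for each faulty $a_i$ encountered during the induction, one must know $q$ has already received at least one failure notification targeting $a_i$. I would discharge it by strengthening the inductive statement so that, in addition to asserting $(a_1, \ldots, a_i)$ lies in $g[q][p_\ast]$, it also asserts $F[q][a_j] \neq \emptyset$ for every $j < i$. The new obligation at the inductive step, $F[q][a_i] \neq \emptyset$, can then be discharged by unpacking the non-recursive characterization~\eqref{eq:td_nodes} at $a_i$: membership of $a_i$ in $g[q][p_\ast].\mathit{node}$ beyond the root (handled by invariant $\text{I}_4$) exhibits a witness path in $\bar{K}_n(q)$ from $p_\ast$ to $a_i$ on which every intermediate node already has a nonempty $F[q]$ entry, and aligning this witness with the segment of $\pi_{p_\ast,p}(G)$ produced by the induction hypothesis forces the desired nonemptiness at $a_i$ itself. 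Getting this alignment rigorous in TLAPS, while keeping the argument strictly safety-based (so that no implicit appeal to the completeness liveness property leaks in), is where I expect the real difficulty to lie.
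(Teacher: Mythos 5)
Your induction skeleton coincides with the paper's: base case from invariant $\text{I}_1$, and the inductive step from the contrapositive of Equation~\eqref{eq:message_path_eq2} (giving $a_{i+1}\notin F[q][a_i]$ because $p_\ast\notin M[q]$) combined with invariants $\text{I}_2$ and $\text{I}_3$. You have also correctly isolated the crux: the side condition $F[q][a_i]\neq\emptyset$. But your proposed discharge of that condition does not work. Equation~\eqref{eq:td_nodes} (equivalently $\text{I}_4$) asserts nonempty $F[q]$ entries only for the nodes \emph{strictly preceding} the member on the witness path---the quantifier ranges over $\pi_{p_\ast,q}(\bar{K}_n(p))\setminus\{q\}$---and says nothing about the member itself. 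A node sits in a tracking digraph precisely while it is merely \emph{suspected} to hold $m_\ast$, which is entirely compatible with $q$ having received no notification of its failure; the initial state $g[q][p_\ast].\mathit{node}=\{p_\ast\}$ with $F[q][p_\ast]=\emptyset$ is already a counterexample to the pattern you rely on. No alignment of the witness path with $(a_1,\ldots,a_i)$ can manufacture $F[q][a_i]\neq\emptyset$ from membership alone.

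The deeper problem is that you insist on a "strictly safety-based" argument for a fact that is not a state invariant: there are reachable states in which $a_i$ has failed but no notification has yet propagated to $q$, so $F[q][a_i]=\emptyset$ and $a_{i+1}$ is not yet in $g[q][p_\ast]$. The lemma's conclusion is implicitly temporal (it is applied in Theorem~\ref{th:set_agreement} as "before $q$ terminates"), and the paper discharges the side condition with exactly the appeal you are trying to avoid: the FD's completeness together with the vertex-connectivity of $G$ guarantee that $q$ \emph{eventually} receives a notification of each $a_j$'s failure, while Lemma~\ref{lemma:noDisconnect} guarantees that $q$ cannot have removed any $a_j$ from $g[q][p_\ast]$ before those notifications arrive---so the path persists until $F[q][a_i]\neq\emptyset$ holds and the extension step fires. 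You never invoke Lemma~\ref{lemma:noDisconnect}, and without it (or some substitute persistence argument) the induction cannot be closed.
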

\begin{proof}
We use mathematical induction: 
The basic case is given by invariant $\text{I}_1$~(\cref{sec:td_update}), $a_1 = p_\ast \in g[q][p_\ast].\mathit{node}$.
For the inductive step, we assume $\left(a_{1},\ldots,a_{i}\right) \in g[q][p_\ast]$ for some $1 \leq i < k$.
Due to the FD's completeness property~(\cref{sec:safe_and_live}), the failure of $a_{j},\,\forall1\leq j \leq i$ is eventually detected;
due to the vertex-connectivity of $G$, $q$ eventually receives the failure notification of every $a_j$.
Moreover, $q$ cannot remove $a_j,\,\forall1\leq j \leq i$ from $g[q][p_\ast]$ before it receive failure notifications 
of every $a_j$ (cf.~Lemma~\ref{lemma:noDisconnect}). Thus, eventually $F[q][a_i] \neq \emptyset$ 
and, since $a_{i+1} \notin F[q][a_{i}]$ (cf.~Lemma~\ref{lemma:MessagePath}), $a_{i+1} \in g[q][p_\ast].\mathit{node}$ 
and $(a_i,a_{i+1}) \in g[q][p_\ast].\mathit{edge}$ (according to invariants~$\text{I}_2$ and~$\text{I}_3$).
\end{proof} 

\begin{figure}[!tp]
\centering
\input{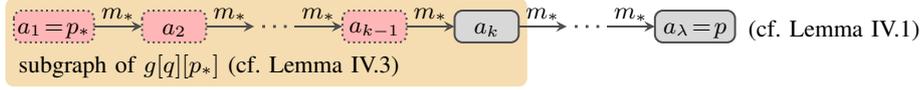}
\caption{The path $\pi_{p_\ast,p}$ on which a non-faulty server $p$ first receives $m_\ast$ from $p_\ast$; 
every server on the path (except $p_\ast$) has received $m_\ast$ from its predecessor.
Dotted red boxes indicate faulty servers; in case of no failures, $a_k=p_\ast$. Solid gray boxes indicate non-faulty servers. 
If another non-faulty server $q$ does not receive $m_\ast$, then it suspects all servers on the subpath from $a_1$ to $a_k$ to have $m_\ast$ and 
to have received $m_\ast$ from their predecessors on the path (except $p_\ast$).}
\label{fig:set_agreement_proof}
\end{figure}

\begin{theorem}
\label{th:set_agreement}
\AC{}'s specification guarantees set agreement. 
\end{theorem}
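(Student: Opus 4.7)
The plan is to proceed by contradiction. Assume the conclusion fails: there exist non-faulty servers $p$ and $q$ with $\mathit{done}[p] = \mathit{done}[q] = 1$ yet $M[p] \neq M[q]$. By symmetry I may pick a server $p_\ast$ with $p_\ast \in M[p] \setminus M[q]$. The goal is then to exhibit a node that belongs to $g[q][p_\ast].\mathit{node}$ at the moment $q$ terminates, which would violate the precondition of $\mathtt{Adeliver}(q)$---namely that all tracking digraphs are empty.

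First, I would invoke Lemma~\ref{lemma:MessagePath} on $p$ and $p_\ast$ to obtain a path $\pi_{p_\ast,p}(G) = (a_1,\ldots,a_\lambda)$, with $a_1 = p_\ast$ and $a_\lambda = p$, along which $p$ received $m_\ast$. Since $a_\lambda = p$ is non-faulty, there is a well-defined smallest index $k$ such that $a_k$ is non-faulty while $a_1,\ldots,a_{k-1}$ are all faulty (this is the configuration illustrated in Figure~\ref{fig:set_agreement_proof}; if no server on the path has failed then $k=1$ and $a_k = p_\ast$).

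Next, I would apply Lemma~\ref{lemma:ak_in_g} to the pair $(p,q)$ with the prefix of length $k$, concluding that $a_k \in g[q][p_\ast].\mathit{node}$. The key fact Lemma~\ref{lemma:ak_in_g} rests on is accuracy of the perfect FD, which I have already proved: since $a_k$ is non-faulty, no server ever emits a failure notification targeting $a_k$, so $F[r][a_k] = \emptyset$ for every $r$, and in particular $q$ cannot prune $a_k$ from its tracking digraph via the scenarios identified in Lemma~\ref{lemma:noDisconnect}. Combined with the precondition of $\mathtt{Adeliver}(q)$---which forces $g[q][p_\ast].\mathit{node} = \emptyset$ at the moment $\mathit{done}[q]$ is set---this yields the desired contradiction, and set agreement follows.

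The main obstacle will be the mechanization in TLAPS. Set agreement is phrased as a state invariant, but the lemmas above talk about paths traced through the history of the computation (the path on which $m_\ast$ was actually propagated, the first non-faulty node reached along it, etc.), which is not immediately available in a single state. To make the proof go through inductively I expect to strengthen the invariant so that, for every server $s$ with $p_\ast \in M[s]$, the witness path of Lemma~\ref{lemma:MessagePath} is preserved by every next-state action, and so that for every non-faulty $q$ with $p_\ast \notin M[q]$ the conclusion of Lemma~\ref{lemma:ak_in_g} continues to hold step by step. Once these auxiliary invariants are established, the contradiction at an $\mathtt{Adeliver}(q)$ step is immediate from confronting $a_k \in g[q][p_\ast].\mathit{node}$ with the emptiness precondition.
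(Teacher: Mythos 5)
Your overall route is the paper's: argue by contradiction, extract the path $\pi_{p_\ast,p}=(a_1,\ldots,a_\lambda)$ from Lemma~\ref{lemma:MessagePath}, isolate the first non-faulty server $a_k$ on it, place $a_k$ in $g[q][p_\ast]$ via Lemma~\ref{lemma:ak_in_g}, and collide with the emptiness precondition of $\mathtt{Adeliver}(q)$. The gap is in how you close the collision. You assert that, because $a_k$ is non-faulty and accuracy gives $F[r][a_k]=\emptyset$ for every $r$, server $q$ ``cannot prune $a_k$ via the scenarios identified in Lemma~\ref{lemma:noDisconnect}.'' Accuracy disposes of only one of the three removal scenarios, namely complete pruning of a tracking digraph all of whose nodes are known to have failed. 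The disconnection scenario---$a_k$ losing its path to the root $p_\ast$ inside $g[q][p_\ast]$---does not require $F[q][a_k]\neq\emptyset$ at all, so accuracy says nothing about it; it is precisely the case the paper's proof spends its final paragraph on. There, a disconnection forces the removal of some edge $(a_i,a_{i+1})$, $1\leq i<k$, of the prefix $(a_1,\ldots,a_k)$; by invariant $\text{I}_3$ this means $a_{i+1}\in F[q][a_i]$, and then clause~\eqref{eq:message_path_eq2} of Lemma~\ref{lemma:MessagePath} (i.e., Proposition~\ref{prop:in_order_trans}) forces $p_\ast\in M[q]$, contradicting the standing assumption. You must make this step explicit; as written, your justification covers scenario~(1) (by assumption) and scenario~(3) (by accuracy) but silently skips scenario~(2), which is the heart of the argument. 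If your intent is to use Lemma~\ref{lemma:noDisconnect} as a black box---termination is only reachable via the all-known-failed pruning, hence would require $F[q][a_k]\neq\emptyset$---the argument does close, but then the disconnection case is being discharged inside that lemma, not by accuracy, and you should say so.

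Two smaller points. First, Lemma~\ref{lemma:ak_in_g} rests mainly on the FD's completeness, the vertex-connectivity of $G$, and invariants $\text{I}_1$--$\text{I}_3$, not on accuracy; accuracy enters only in the theorem itself, to guarantee $F[q][a_k]=\emptyset$. Second, your worry about mechanization is well founded and your proposed fix (strengthening the invariant to carry the witness path through every action) is in the right spirit, but the paper's concrete device is different: since $g[q][p_\ast]$ is already empty when $\mathtt{Adeliver}(q)$ is enabled, it reconstructs the tracking digraph $\mathtt{RTD}(q,p_\ast)$ from the failure notifications in $F[q]$, which are still available in that state, and proves an RTD invariant preserved by $\mathtt{Abcast}$, $\mathtt{RecvBCAST}$ and $\mathtt{RecvFAIL}$.
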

\begin{proof}
Let $p_\ast \in M[p]$, but $p_\ast \notin M[q]$. 
Then, $\exists \pi_{p_\ast,p} = \left(a_{1},\ldots,a_{\lambda}\right)$ on which $m_\ast$ first arrives at $p$ (cf.~Lemma~\ref{lemma:MessagePath}).
Let $a_k$ be a server on $\pi_{p_\ast,p}$ such that 
$\mathit{nf}[a_k] = 1$ and $\mathit{nf}[a_i] = 0,\,\forall 1 \leq i < k$;
the existence of $a_k$ is given by the existence of $p$, a server that is both non-faulty and on $\pi_{p_\ast,p}$.
The path $\pi_{p_\ast,p}$ is illustrated in Figure~\ref{fig:set_agreement_proof}; the faulty servers are indicated by dotted red boxes.
Before $q$ terminates, $a_k \in g[q][p_\ast].\mathit{node}$ (cf.~Lemma~\ref{lemma:ak_in_g}).
However, due to the precondition of $\mathtt{Adeliver}(q)$, 
i.e., $\mathit{done}[q] = 1 \Rightarrow g[q][s].\mathit{node} = \emptyset,\,\forall s \in \mathit{S}$~(\cref{sec:ab_spec}), 
$a_k$ is subsequently removed from $g[q][p_\ast]$. 
Since $m_\ast \notin M[q]$ and $\mathit{nf}[a_k] = 1$, $a_k$ must have been disconnected from $p_\ast$~(\cref{sec:td_update}).
Thus, $q$ removed at least one edge from $\left(a_{1},\ldots,a_{k}\right)$; let $(a_{i},a_{i+1}), 1 \leq i < k$ be one of the removed edges. 
Then, $a_{i+1} \in F[q][a_{i}]$ (cf. $\text{I}_3$), which entails $p_\ast \in M[q]$ (cf.~Lemma~\ref{lemma:MessagePath}).
\end{proof}

\subsubsection{Reconstructed tracking digraphs}
The set agreement proof relies on the following property of tracking digraphs~(\cref{sec:td_update}):
If a server $q$ was removed from $g[p][p_\ast]$, then one of the following is 
true---(1) $p_\ast \in M[p]$; 
(2) $F[p][q] \neq \emptyset$\footnote{$F[p][q] \neq \emptyset$ is necessary, but not sufficient to remove~$q$ from $g[p][p_\ast]$.}; 
and (3) $\nexists \pi_{p_\ast,q}(g[p][p_\ast])$.
Yet, when the $\mathtt{Adeliver}(p)$ operator is enabled, $g[p][p_\ast].\mathit{node} = \emptyset$;
checking the existence of a path $\pi_{p_\ast,q}(g[p][p_\ast])$ is not possible. 
Thus, we reconstruct $g[p][p_\ast]$ from the failure notifications received by $p$; 
we denote the resulted digraph by $\mathtt{RTD}(p,p_\ast)$.

\textbf{RTD nodes.}
Constructing the set $\mathtt{RTD}(p,p_\ast).\mathit{node}$ 
is similar to the TLAPS specification of updating tracking digraphs~(\cref{sec:td_update}). 
The difference is twofold. First, if $p$ receives from $p_o$ a notification of $p_t$'s failure, with $p_t \in \mathtt{RTD}(p,p_\ast)$, 
then $p$ adds all $p_t$'s successors to $\mathtt{RTD}(p,p_\ast)$---including~$p_o$.
Second, servers are never removed from $\mathtt{RTD}(p,p_\ast)$. 
Clearly, every server added at any point to $g[p][p_\ast]$ is also in $\mathtt{RTD}(p,p_\ast)$

\textbf{RTD edges.}
To construct the set $\mathtt{RTD}(p,p_\ast).\mathit{edge}$, we first connect (in $\mathtt{RTD}(p,p_\ast)$) 
each server known to have failed to its successors. Then, we remove all the edges on which we are certain 
$p_\ast$'s message $m_\ast$ was not transmitted. To identify these edges we use Proposition~\ref{prop:in_order_trans}: 
If $p$ received from $e_2$ a notification of $e_1$ failure without previously receiving $m_\ast$, then 
the edge $(e_1,e_2)$ is not part of $\mathtt{RTD}(p,p_\ast)$, i.e., 
\begin{align}
\label{eq:rtd_edges}
&\forall e_1,e_2 \in \mathtt{RTD}(p,p_\ast).\mathit{node} \,:\, 
e_2 \in F[p][e_1] \wedge p_\ast \notin M[p] \Rightarrow (e_1,e_2) \notin \mathtt{RTD}(p,p_\ast).\mathit{edge}. 
\end{align}
Some of the remaining edges in $\mathtt{RTD}(p,p_\ast)$ were still not used for transmitting $m_\ast$, 
i.e., the edges for which $e_2 \in F[p][e_1]$ occurred before $p_\ast \in M[p]$. 
Since the $\mathtt{Adeliver}(p)$ operator has no access to the history of state updates modifying either $F[p][e_1]$ or $M[p]$, 
we cannot identify these edges. Yet, since $p_\ast \in M[p]$, these edges play no role in proving set agreement. 
Clearly, every edge added at any point to $g[p][p_\ast]$ is also in $\mathtt{RTD}(p,p_\ast)$.

\textbf{RTD invariant.} 
Using reconstructed tracking digraphs, we redefine the above property of tracking digraphs 
as an invariant, referred to as the \emph{RTD invariant}: 
\begin{align*}
&\forall p,q,p_\ast \in \mathit{S} \,:\, (q \in \mathtt{RTD}(p,p_\ast).\mathit{node} \wedge q \notin g[p][p_\ast].\mathit{node}) 
 \Rightarrow  p_\ast \in M[p] \vee  F[p][q] \neq \emptyset  \vee \nexists \pi_{p_\ast,q}(\mathtt{RTD}(p,p_\ast)). 
\end{align*}

The RTD invariant enables us to formally prove set agreement using TLAPS~\cite{allconcur_tla}. 
Clearly, when $\mathtt{Adeliver}(q)$ is enabled, $a_k$ (from the proof of Theorem~\ref{th:set_agreement}) 
is in $\mathtt{RTD}(q,p_\ast)$, but not in $g[q][p_\ast]$.
According to the initial assumptions, $p_\ast \notin M[q]$ and $F[q][a_k] = \emptyset$. 
Moreover, $\pi_{p_\ast,a_k} = \left(a_{1},\ldots,a_{k}\right)$ was at some point a path in $g[q][p_\ast]$ (cf.~Lemma~\ref{lemma:ak_in_g});
hence, $\pi_{p_\ast,a_k}$ is also a path in $\mathtt{RTD}(p,p_\ast)$ (follows from $\mathtt{RTD}$'s construction),  
which contradicts the RTD invariant.

\subsubsection{Proving the RTD invariant}
To prove the RTD invariant, we follow the same pattern as before: 
We prove that the invariant holds for the initial state (since $\mathtt{RTD}(p,p_\ast) = g[p][p_\ast],\,\forall p,p_\ast$) 
and is preserved by the only three operators that update the $M$, $G$ or $F$ variables---$\mathtt{Abcast}$; 
$\mathtt{RecvBCAST}$; $\mathtt{RecvFAIL}$ (see Table~\ref{tab:state_changes}).
In the following proofs, $X^\prime$ denotes the updated value of a variable $X$ after applying an operator~\cite{Lamport:2002:SST:579617}. 

\begin{theorem}
\label{th:rtd_inv_bcast}
Both $\mathtt{Abcast}$ and $\mathtt{RecvBCAST}$ operators preserve the RTD invariant. 
\end{theorem}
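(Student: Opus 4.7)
The plan is to follow the same invariant-preservation template used earlier in the section: assume the RTD invariant holds in a pre-state, apply either $\mathtt{Abcast}(p)$ or $\mathtt{RecvBCAST}(p,m)$, and check that the invariant still holds in the post-state. The crucial structural observation about both operators is that they share the same footprint on the relevant variables: each adds exactly one server $p_0$ to $M[p]$ (where $p_0 = p$ for $\mathtt{Abcast}$ and $p_0 = m.o$ for $\mathtt{RecvBCAST}$), clears the nodes of the single tracking digraph $g[p][p_0]$, and leaves $F$ untouched. Since $\mathtt{RTD}(p',p_\ast')$ depends only on $F$ and on the membership test ``$p_\ast' \in M[p']$'', this gives us strong control over how the reconstructed tracking digraphs can change.

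The proof proceeds by a case split on the universally quantified triple $(p', q, p_\ast')$ that witnesses a potential violation. First, if $p' \neq p$, then neither $g$, $F$, nor the relevant component of $M$ associated with $p'$ is touched, so both $g[p'][p_\ast']$ and $\mathtt{RTD}(p',p_\ast')$ are literally unchanged, and the invariant is inherited from the pre-state. Second, if $p' = p$ but $p_\ast' \neq p_0$, then adding $p_0$ to $M[p]$ does not alter the truth value of $p_\ast' \in M[p]$, so the node set and the edge-removal condition \eqref{eq:rtd_edges} defining $\mathtt{RTD}(p,p_\ast')$ yield exactly the same digraph as before; since $g[p][p_\ast']$ is also untouched, the invariant again transfers from the pre-state. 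Third, if $p' = p$ and $p_\ast' = p_0$, the post-state satisfies $p_\ast' \in M'[p]$, so the first disjunct on the right-hand side of the RTD invariant is discharged trivially, regardless of how $\mathtt{RTD}(p,p_0)$ or $g[p][p_0]$ looks.

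The main obstacle will be the second case, where we must convince TLAPS that $\mathtt{RTD}(p,p_\ast')$ is literally the same digraph before and after the update. Because $\mathtt{RTD}$ is defined constructively in terms of $F$ and a pointwise membership query on $M$, this reduces to showing that for $p_\ast' \neq p_0$ we have $(p_\ast' \in M[p]) \Leftrightarrow (p_\ast' \in M[p] \cup \{p_0\})$, and then lifting this through both the node-growth rule and the edge-filtering rule \eqref{eq:rtd_edges}. This is conceptually trivial but requires the proof to unfold the $\mathtt{RTD}$ definition carefully and invoke extensionality over both $\mathtt{RTD}(p,p_\ast').\mathit{node}$ and $\mathtt{RTD}(p,p_\ast').\mathit{edge}$. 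Once this equality is in hand, the remaining cases reduce either to trivial state-equality arguments or to the $p_\ast' \in M'[p]$ disjunct, completing the proof.
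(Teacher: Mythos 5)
Your proposal is correct and is in substance the paper's own argument: both proofs rest on the observations that $F$ is unchanged (so $\mathtt{RTD}(p,p_\ast).\mathit{node}$ is unchanged), that any change to $\mathtt{RTD}(p,p_\ast).\mathit{edge}$ or to $g[p][p_\ast]$ can only occur for a root $p_\ast$ newly added to $M[p]$, and that for such roots the disjunct $p_\ast \in M^\prime[p]$ discharges the invariant outright. The paper phrases the middle step contrapositively (a newly enabled path forces a new edge, which by Equation~\eqref{eq:rtd_edges} forces $p_\ast \in M^\prime[p]$) whereas you establish $\mathtt{RTD}$-equality directly for unaffected roots, and you should note that $\mathtt{RecvBCAST}(p,m)$ may also trigger $\mathtt{Abcast}(p)$, so the ``touched'' root set can be $\{m.o,p\}$ rather than a single $p_0$ --- but your case analysis extends to this verbatim.
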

\begin{proof}
According to the specifications of both $\mathtt{Abcast}$ and $\mathtt{RecvBCAST}$,
$g[p][p_\ast] \neq g^\prime[p][p_\ast] \Rightarrow p_\ast \in M^\prime[p]$~(\cref{sec:ab_spec});
thus, we assume $g[p][p_\ast] = g^\prime[p][p_\ast]$. 
In addition, since $F[p]= F^\prime[p]$, $\mathtt{RTD}(p,p_\ast).\mathit{node}=\mathtt{RTD}(p,p_\ast)^\prime.\mathit{node}$.
Thus, since $M[p] \subseteq M^\prime[p]$, the only possibility for the RTD invariant to not be preserved by either operators 
is $\exists \pi_{p_\ast,q}(\mathtt{RTD}(p,p_\ast)^\prime)$. 
Let $(e_1,e_2)$ be one of the edges that enables such a path, 
i.e., $(e_1,e_2) \notin \mathtt{RTD}(p,p_\ast).\mathit{edge} \wedge (e_1,e_2) \in \mathtt{RTD}(p,p_\ast)^\prime.\mathit{edge}$.
As a result,  $e_2 \in F[p][e_1] \wedge p_\ast \notin M[p] \wedge (e2 \notin F^\prime[p][e_1] \vee p_\ast \in M^\prime[p])$ 
(cf.~Equation~\eqref{eq:rtd_edges}), which is equivalent to $p_\ast \in M^\prime[p]$.
\end{proof} 

\begin{theorem}
\label{th:rtd_inv_fail}
The $\mathtt{RecvFAIL}$ operator preserves the RTD invariant. 
\end{theorem}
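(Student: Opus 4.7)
The proof follows the same pattern as Theorem~\ref{th:rtd_inv_bcast}. Let $p_r$ be the server applying $\mathtt{RecvFAIL}$ and let $p_t = m.t$, $p_o = m.o$ be the target and owner of the notification. For any $p \neq p_r$ no relevant variable changes, so the invariant is trivially preserved; we restrict to $p = p_r$ with arbitrary $q, p_\ast \in \mathit{S}$. Since $M[p]$ is untouched, the first disjunct $p_\ast \in M[p]$ transfers verbatim, and we may assume $p_\ast \notin M[p]$; the second disjunct is monotonically preserved, because $F[p][q] \subseteq F^\prime[p][q]$.

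The key step is to relate the changes in $\mathtt{RTD}(p,p_\ast)$ and $g[p][p_\ast]$. Adding $p_o$ to $F[p][p_t]$ admits into $\mathtt{RTD}(p,p_\ast)^\prime$ the candidate edges $\{(p_t, y) : y \in p_t^+(G)\}$ (since $p_t$ may have been newly marked as failed), minus the edge $(p_t, p_o)$, which Equation~\eqref{eq:rtd_edges} excludes under $p_\ast \notin M[p]$; the newly $\mathtt{RTD}^\prime$-reachable nodes are precisely those reached along these new edges, together with any failure-chain descendants. Symmetrically, when $p_t \in g[p][p_\ast]$ the recursive update of Section~\ref{sec:td_update} either adds the same edges and nodes (first notification for $p_t$) or removes $(p_t, p_o)$ along with any cascading disconnected subtree (subsequent notification).

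We case-split on whether $q \in \mathtt{RTD}(p,p_\ast).\mathit{node}$ held before. If it did not, then $q$ is newly admitted and we are in the first-notification branch with $p_t \in \mathtt{RTD}$; each new reachable node is also inserted into $g^\prime[p][p_\ast]$ by the matching $g$-expansion, \emph{except} $q = p_o$, and for $q = p_o$ the excluded $(p_t, p_o)$ is the only candidate incoming edge, so $\nexists \pi_{p_\ast,q}(\mathtt{RTD}(p,p_\ast)^\prime)$. If $q$ was already in $\mathtt{RTD}(p,p_\ast)$, we invoke the previous-state invariant: the three prior disjuncts propagate, with the third argued as above (any new path through a $(p_t, y)$ edge would co-insert $q$ into $g^\prime$ via the recursive expansion). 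Moreover, when $p_t$ was reachable in $\mathtt{RTD}$ but absent from $g$, the prior invariant forces $F[p][p_t] \neq \emptyset$, so this is a subsequent-notification step in which $\mathtt{RTD}^\prime$ admits no new edges; and when $p_t$ was in $\mathtt{RTD}$ yet unreachable from $p_\ast$, the new outgoing edges from $p_t$ cannot create a fresh $p_\ast \rightsquigarrow q$ path either.

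The main obstacle will be the subsequent-notification branch with $p_t \in g[p][p_\ast]$: removing $(p_t, p_o)$ can cascade to prune a whole subtree from $g$, leaving nodes that were in both $\mathtt{RTD}$ and $g$ before but now only in $\mathtt{RTD}^\prime$. For each such pruned $q$, the TLAPS proof will require an auxiliary structural lemma ensuring that every $\mathtt{RTD}(p,p_\ast)^\prime$-path from $p_\ast$ to $q$ must cross an edge already excluded by Equation~\eqref{eq:rtd_edges}, yielding $\nexists \pi_{p_\ast,q}(\mathtt{RTD}(p,p_\ast)^\prime)$; this correspondence between $g$-reachability and $\mathtt{RTD}^\prime$ edge-admissibility is the piece we would prioritise when mechanising the proof, splitting the TLAPS argument along the two branches of the update procedure and maintaining it as an invariant side-condition.
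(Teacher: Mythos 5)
There is a genuine gap, and you name it yourself: the ``main obstacle'' --- a subsequent notification removing $(p_t,p_o)$ from $g[p][p_\ast]$ and cascading to prune a subtree, leaving nodes in $\mathtt{RTD}(p,p_\ast)^\prime$ but not in $g^\prime[p][p_\ast]$ --- is deferred to an unproven ``auxiliary structural lemma'' asserting that every $\mathtt{RTD}^\prime$-path from $p_\ast$ to such a $q$ crosses an edge excluded by Equation~\eqref{eq:rtd_edges}. That lemma \emph{is} the third disjunct of the RTD invariant for exactly the hard case, so deferring it defers the theorem. The difficulty dissolves once you stop reasoning about the recursive edge-removal procedure and instead use the non-recursive characterization of $\mathtt{TD}(p,p_\ast)$ via invariants $\text{I}_1$--$\text{I}_4$ and Equation~\eqref{eq:td_nodes}, under which $g^\prime[p][p_\ast]$ is always either $\mathtt{TD}(p,p_\ast)^\prime$ or empty. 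The paper argues by contradiction from the negation of all three disjuncts: the assumed path $\pi_{p_\ast,q}(\mathtt{RTD}(p,p_\ast)^\prime)$ together with $p_\ast \notin M^\prime[p]$ forces every edge of that path into $\mathtt{TD}(p,p_\ast)^\prime$, hence $q \in \mathtt{TD}(p,p_\ast)^\prime.\mathit{node}$; then $g[p][p_\ast]$ cannot have been recomputed (else $q \in g^\prime[p][p_\ast]$ or $F^\prime[p][q] \neq \emptyset$, both contradictions); and a final case analysis on whether $g[p][p_\ast]$ is in its initial, final, or intermediary state closes each branch. In other words, a cascading prune that orphans $q$ cannot coexist with the existence of a path to $q$ in $\mathtt{RTD}(p,p_\ast)^\prime$ --- which is precisely what your proposal leaves open.

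A secondary inaccuracy: in your first case you claim that each node newly reachable in $\mathtt{RTD}(p,p_\ast)^\prime$ ``is also inserted into $g^\prime[p][p_\ast]$ by the matching $g$-expansion.'' This presumes $p_t \in g[p][p_\ast].\mathit{node}$, but the hypothesis of that case is only $p_t \in \mathtt{RTD}(p,p_\ast)$, and the two sets can differ (indeed the invariant being proved is about exactly that difference). You patch the already-in-$\mathtt{RTD}$ case with the prior-state invariant, but the newly-admitted-$q$ case needs the same care: when $p_t \in \mathtt{RTD}(p,p_\ast) \setminus g[p][p_\ast]$ the prior invariant gives $F[p][p_t] \neq \emptyset$ or $\nexists \pi_{p_\ast,p_t}(\mathtt{RTD}(p,p_\ast))$, and each subcase must be discharged separately rather than by appeal to a matching expansion of $g$.
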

\begin{proof}
We consider $\mathtt{RecvFAIL}(s,m)$, with $s \in \mathit{S}$; clearly, $M=M^\prime$ (see Table~\ref{tab:state_changes}). 
Also, if $s \neq p$, the RTD invariant is preserved, since also $F[p]=F^\prime[p]$ and $g[p]=g^\prime[p]$. 
Thus, the proof is concerned only with $\mathtt{RecvFAIL}(p,m)$. 
Aside from adding $m.o$ to $F[p][m.t]$, $F=F^\prime$. 
Clearly, $q=m.t \Rightarrow F^\prime[p][q] \neq \emptyset$.
Thus, we make the following assumptions, i.e., the only non-trivial case (cf.~RTD invariant):
\begin{align*}
  &\text{(A}_1\text{)}\quad q \neq m.t; & 
  &\text{(A}_3\text{)}\quad q \notin g^\prime[p][p_\ast].\mathit{node}; &
  &\text{(A}_5\text{)}\quad F^\prime[p][q] = \emptyset; \\
  &\text{(A}_2\text{)}\quad q \in \mathtt{RTD}(p,p_\ast)^\prime.\mathit{node}; &
  &\text{(A}_4\text{)}\quad p_\ast \notin M^\prime[p]; &
  &\text{(A}_6\text{)}\quad \exists \pi_{p_\ast,q}(\mathtt{RTD}(p,p_\ast)^\prime).
\end{align*}
We split the remainder of the proof in three steps. 

\emph{Step~1}. 
We show that $q$ is in $\mathtt{TD}(p,p_\ast)^\prime$~(\cref{sec:td_update}).
Let $\pi_{p_\ast,q}(\mathtt{RTD}(p,p_\ast)^\prime)=(a_1,\ldots,a_\lambda)$ (cf.~($\text{A}_6$)); 
then $(a_k,a_{k+1}) \in \mathtt{TD}(p,p_\ast)^\prime.\mathit{edge},\, \forall 1\leq k < \lambda$,
since $p_\ast \notin M^\prime[p]$ (cf.~($\text{A}_4$)) and no edges are removed from $\mathtt{RTD}(p,p_\ast)^\prime$ (cf.~Equation~\eqref{eq:rtd_edges}).
Thus, $a_k \in \mathtt{TD}(p,p_\ast)^\prime.\mathit{node}, \, \forall 1\leq k \leq \lambda$ (cf.~Equation~\eqref{eq:td_nodes}),
which entails the following assumption:
\begin{align*}
  &\text{(A}_7\text{)}\quad q \in \mathtt{TD}(p,p_\ast)^\prime.\mathit{node}.
\end{align*}

\emph{Step~2}. 
We show that $g[p][p_\ast]$ is not updated.
If $g^\prime[p][p_\ast] \neq g[p][p_\ast]$, then, $g^\prime[p][p_\ast]$ is either $\mathtt{TD}(p,p_\ast)^\prime$
or an empty digraph obtained after completely pruning $\mathtt{TD}(p,p_\ast)^\prime$. 
Since the latter option requires $F^\prime[p][s] \neq \emptyset,\, \forall s \in \mathtt{TD}(p,p_\ast)^\prime.\mathit{node}$~(\cref{sec:td_update}), 
which contradicts ($\text{A}_5$) and ($\text{A}_7$), we assume $g^\prime[p][p_\ast] = \mathtt{TD}(p,p_\ast)^\prime$.
Yet, this entails $q \in g^\prime[p][p_\ast].\mathit{node}$ (cf.~$\text{A}_7$), which contradicts~($\text{A}_3$). 
Thus, we make the following two equivalent assumptions~(\cref{sec:td_update}):
\begin{align*}
 &\text{(A}_8\text{)}\quad g[p][p_\ast]=g^\prime[p][p_\ast]; &
 &\text{(A}_9\text{)}\quad m.t \notin g[p][p_\ast].\mathit{node}.
\end{align*}

\emph{Step~3}. We distinguish between three scenarios---according to the specification~(\cref{sec:ab_spec}), 
$g[p][p_\ast]$ can be in one of the following three states: 
(1) an initial state, i.e., $g[p][p_\ast].\mathit{node}=\{p_\ast\}$, with $F[p][p_\ast] = \emptyset$; 
(2) a final state, i.e., $g[p][p_\ast].\mathit{node}=\emptyset$; 
and (3) an intermediary state, i.e., $g[p][p_\ast]=\mathtt{TD}(p,p_\ast)$.

The \emph{initial state} entails $g^\prime[p][p_\ast].\mathit{node}=\{p_\ast\}$ (cf.~($\text{A}_8$)); 
thus, $p_\ast$ is neither $m.t$ (cf.~($\text{A}_9$)) nor $q$ (cf.~($\text{A}_3$)).
Clearly, $p_\ast \neq m.t \Rightarrow F[p][p_\ast]=F^\prime[p][p_\ast]$.
Moreover, according to ($\text{A}_6$) and the construction of $\mathtt{RTD}(p,p_\ast)^\prime$, 
$p_\ast \neq q$ entails $F^\prime[p][p_\ast] \neq \emptyset$.
Yet, this contradicts the condition of the initial state, i.e., $F[p][p_\ast] = \emptyset$.

The \emph{final state}, i.e., $g[p][p_\ast].\mathit{node}=\emptyset$, 
can be reached only by completely pruning the tracking digraph (cf.~($\text{A}_4$) and $M[p] \subseteq M^\prime[p]$);
i.e., $F[p][s] \neq \emptyset,\, \forall s \in \mathtt{TD}(p,p_\ast).\mathit{node}$~(\cref{sec:td_update}).
According to invariant~$\text{I}_2$, no server can be added to $\mathtt{TD}(p,p_\ast)$, 
regardless of the received failure notification. 
As a result, $q \in \mathtt{TD}(p,p_\ast).\mathit{node}$ (cf.~($\text{A}_7$)) and, thus, $F[p][q] \neq \emptyset$.
Yet, since $F[p][q]=F^\prime[p][q]$ (cf.~($\text{A}_1$)), this contradicts ($\text{A}_5$).

The \emph{intermediary state} entails $g^\prime[p][p_\ast]=\mathtt{TD}(p,p_\ast)$ (cf.~($\text{A}_8$)); 
thus, $\nexists \pi_{p_\ast,q}(\mathtt{TD}(p,p_\ast))$ (cf.~($\text{A}_3$)).
Nonetheless, $\exists \pi_{p_\ast,q}(\mathtt{TD}(p,p_\ast)^\prime)= (b_1,\ldots,b_\gamma) \,:\, 
F^\prime[p][b_k] \neq \emptyset ,\, \forall 1\leq k < \gamma$  (cf.~($\text{A}_7$) and Equation~\eqref{eq:td_nodes}).
Moreover, $\exists 1 \leq k \leq \gamma : (b_k,b_{k+1}) \notin \mathtt{TD}(p,p_\ast).\mathit{edge}$;
let $k$ be the smallest index satisfying this property.
Then, either $F[p][b_k] = \emptyset$ or $b_{k+1} \in F[p][b_k]$ (cf.~invariant~$\text{I}_3$).
If $F[p][b_k] = \emptyset$, then $b_k = m.t$ (since $F^\prime[p][b_k] \neq \emptyset$).
Also, $(b_1,\ldots,b_k)$ is a path in $\mathtt{TD}(p,p_\ast)$ (cf.invariants~$\text{I}_1$, $\text{I}_2$ and~$\text{I}_3$) 
and, thus, $b_k \in \mathtt{TD}(p,p_\ast).\mathit{node}$, which entails $m.t \in g^\prime[p][p_\ast]$. 
Yet, this contradicts ($\text{A}_8$) and ($\text{A}_9$).
If $b_{k+1} \in F[p][b_k]$, then $b_{k+1} \in F^\prime[p][b_k]$ (i.e., $F[p][b_k] \subseteq F^\prime[p][b_k]$), 
which contradicts $(b_k,b_{k+1}) \notin \mathtt{TD}(p,p_\ast)^\prime.\mathit{edge}$.

\end{proof}




\section{Conclusion}

We have provided both a formal design specification and a formal proof of safety of \AC{}, a leaderless concurrent atomic broadcast algorithm. 
Previous work shows the advantage of using \AC{} over classic leader-based approaches, such as Paxos---it 
enables higher throughput for distributed agreement while being completely decentralized~\cite{poke2017allconcur}.
This work builds on \AC{} by both improving the understanding of the algorithm, through a TLA+ design specification, 
and formally proving the algorithm's safety property, using the TLA+ Proof System.

{\small
\textbf{Acknowledgements.}
This work was supported by the German Research Foundation (DFG) as part of the Cluster 
of Excellence in Simulation Technology (EXC 310/2) at the University of Stuttgart.
}

{
\bibliographystyle{abbrv}
\bibliography{references}}

\end{document}